\newtheorem{theorem}{Theorem}[section]
\newtheorem{lemma}[theorem]{Lemma}
\theoremstyle{definition}
\newtheorem{example}[theorem]{Example}
\theoremstyle{remark}
\newtheorem{remark}[theorem]{Remark}
\numberwithin{equation}{section}
\newcommand{\RN}[1]{%
  \textup{\uppercase\expandafter{\romannumeral#1}}%
}
\newcommand{\rn}[1]{%
  {\lowercase\expandafter{(\romannumeral#1)}}%
}
\newcommand{\gl}[1]{\text{GL}(#1)}
\newcommand{\ortho}[1]{\text{O}(#1)}
\newcommand{\un}[1]{\text{U}(#1)}
\newcommand{\symp}[1]{\text{Sp}(#1)}
\DeclareMathOperator{\tr}{tr}
\DeclareMathOperator{\ad}{ad} 
\DeclareMathOperator{\lie}{Lie}
\newcommand{\kak}{\text{K}_1\text{AK}_2}
\newcommand{\twoone}[2]{\begin{bmatrix}#1 \\ #2\end{bmatrix}}
\newcommand{\twotwo}[4]{\begin{bmatrix}#1 & #2\\#3 & #4\end{bmatrix}}
\newcommand{\twotworr}[4]{
\left[ \begin{array}{rr}
#1 & #2\\#3 & #4
\end{array} \right] }
\newcommand{\mi}{\!\scalebox{0.75}[1.0]{$-$}}
\newcommand{\mis}{\scalebox{0.3}[0.7]{\( - \)}}
\renewcommand*\env@matrix[1][*\c@MaxMatrixCols c]{%
  \hskip -\arraycolsep
  \let\@ifnextchar\new@ifnextchar
  \array{#1}}
\begin{document}

\title[On the Cartan Decomposition for Classical Ensembles]{On the Cartan
Decomposition for Classical Random Matrix Ensembles}

\author{Alan Edelman}
\address{Department of Mathematics and Computer Science \& AI Laboratory, Massachusetts Institute of Technology, Cambridge, Massachusetts 02139}
\email{edelman@mit.edu}
\author{Sungwoo Jeong}
\address{Department of Mathematics, Massachusetts Institute of Technology, Cambridge, Massachusetts 02139}
\email{sw2030@mit.edu}



\begin{abstract}
We complete Dyson's dream by cementing the links between symmetric spaces and classical random matrix ensembles. Previous work has focused on a one-to-one correspondence between symmetric spaces and many but not all of the classical random matrix ensembles. This work shows that we can completely capture all of the classical random matrix ensembles from Cartan's symmetric spaces through the use of alternative coordinate systems. In the end, we have to let go of the notion of a one-to-one correspondence. 

\par We emphasize that the KAK decomposition traditionally favored by mathematicians is merely one coordinate system on the symmetric space, albeit a beautiful one. However, other matrix factorizations, especially the generalized singular value decomposition from numerical linear algebra, reveal themselves to be perfectly valid coordinate systems that one symmetric space can lead to many classical random matrix theories.

\par We establish the connection between this numerical linear algebra viewpoint and the theory of generalized Cartan decompositions. This in turn allows us to produce yet more random matrix theories from a single symmetric space. Yet again these random matrix theories arise from matrix factorizations, though ones that we are not aware have appeared in the literature.



\keywords{Classical random matrix ensemble, Matrix factorization, Symmetric space, Generalized Cartan decomposition}
\end{abstract}

\maketitle
\VerbatimFootnotes

\section{Introduction}

\par Random matrix theory (RMT) is a big subject touching so many fields of mathematics, science, and engineering. For such a big subject it is helpful to have a means of cataloging the objects to be studied and a theory that covers the objects in the catalog. In 1962, Freeman Dyson \cite{dyson1962a,dyson1962b,dyson1962c,dyson1962threefold} was the first to propose a systematic approach to RMT. In the beginning of \cite{dyson1962threefold} he states his noble intent:
\begin{quote}
    \textit{To bring together and unify three trends of thought which have grown up independently during the last thirty years.}
\end{quote}
which he enumerates as \rn{1} group representations including time-inversion \rn{2} Weyl's theory of matrix algebras and \rn{3} RMT.

\par Around a decade later, Dyson hit upon the idea that symmetric spaces should play a key role \cite[Section V]{dyson1970correlations}. Dyson's suggestion was taken up in famous papers by Zirnbauer \cite{altland1997nonstandard,zirnbauer1996riemannian} and others \cite{caselle1996new,ivanov2002random}. These papers mainly focus on the noncompact cases. On the mathematical side, inspired by Katz and Sarnak \cite{katz1999random,katz1999zeroes}, Due\~{n}ez detailed connections to RMT for the compact symmetric spaces \cite{duenezthesis,duenez2004random}.

\par Nonetheless we felt there was a gap. When one juxtaposes \rn{1} the well established theory of classical random matrix ensembles with \rn{2} the RMTs associated with symmetric spaces, ensembles are missing. In particular, only very special Jacobi ensembles (the left side of Figure \ref{fig:newjacobi}) seem to be making the symmetric space list. More precisely, if one starts with a symmetric space, one has to make what we call a coordinate system choice, what others might call a matrix factorization choice. This choice has been the map $\Phi : K\times A\to G/K; (k, a)\mapsto kaK$ of Cartan, which we could call the KAK decomposition.\footnote{Although it is often called Cartan's KAK decomposition, Cartan was not aware of $G = KAK$.} See Figure \ref{fig:overview}.

\begin{figure}[h]
    \centering
    \fbox{\includegraphics[width=4.9in]{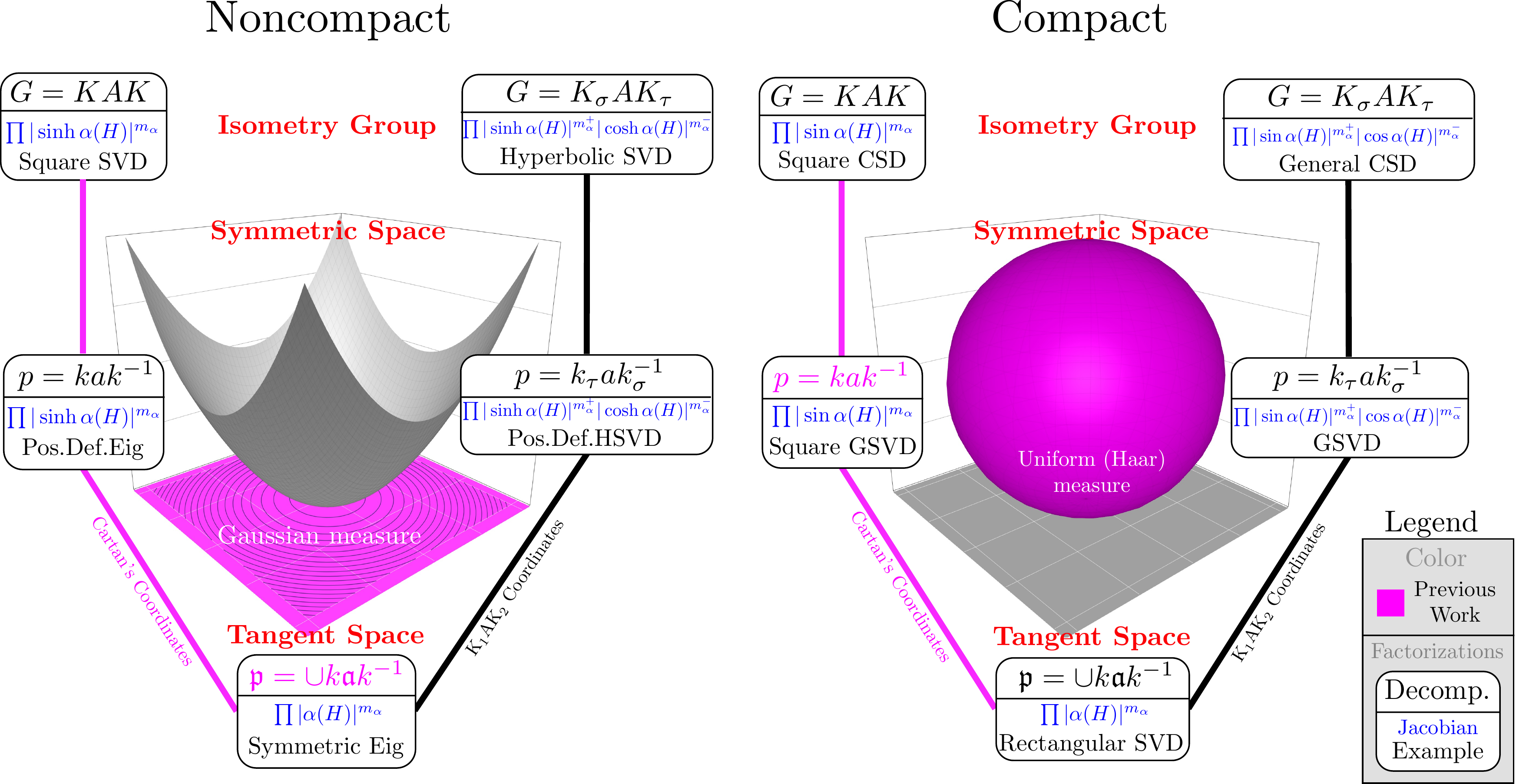}}
    \caption{Families of matrix factorizations associated with a symmetric space, its tangent space, and its isometry group: Shown above are the five factorizations skeleton associated with noncompact (left) and compact (right) symmetric spaces. Each serve as coordinate systems on the respective manifolds. {\color{violet} Previous approaches (manifold, coordinate system, and measure) are shown in magenta.} Examples of the linked factorizations/coordinate systems are shown.}
    \label{fig:overview}
\end{figure}

\par We show that coordinate systems from the generalized Cartan ($\kak$) decomposition associate a single symmetric space to multiple RMTs. Letting go of the historical bias of the KAK decomposition, the full set of Jacobi ensembles (the right side of Figure \ref{fig:newjacobi}) emerges, thereby leading to the complete list of classical random matrix ensembles. Of course, there is much mathematical precedent in differential geometry to letting go of any one special coordinate system. 


\begin{figure}[h]
    \centering
    { \large Possible parameters $(\alpha_1, \alpha_2)$ of the $\beta=2$ Jacobi ensemble\\ \large\vspace{-0.6cm} $$J_{\alpha_1, \alpha_2}(x) \sim \prod_{j<k}|x_j - x_k|^2 \prod_{j=1}^q x_j^{\alpha_1}(1-x_j)^{\alpha_2}$$\vspace{-0.2cm}\\}
    \includegraphics[width=4in]{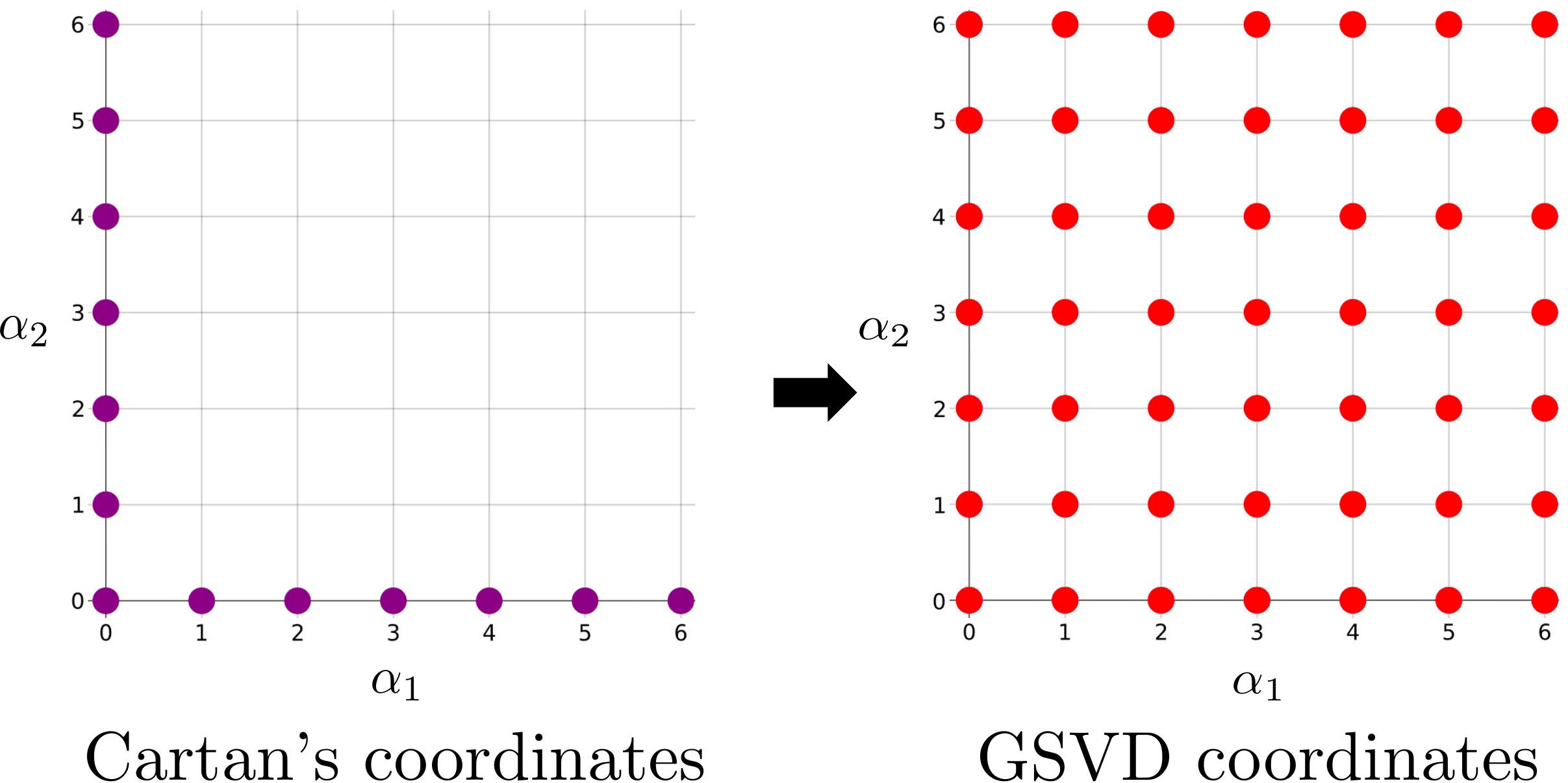}
    \caption{The parameter space $(\alpha_1, \alpha_2)\in(-1,\infty)^2$ of the $\beta=2$ Jacobi ensemble obtained from Cartan's coordinates (KAK) (left) and the generalized 
    singular value decomposition coordinates ($\kak$) (right).}
    \label{fig:newjacobi}
\end{figure}

\subsection{Classical Random Matrix Ensembles}\label{sec:classicalRMT}

The objects that we are interested in are the classical random matrix ensembles. Well established conventions\footnote{The term ``classical random matrix ensembles" may be found in well-known references:
\begin{itemize}
    \item Chapter 1 of Forrester's paper \cite{forrester1998random} has the title ``Classical Random Matrix Ensembles," and the even sections (1.2, 1.4, 1.6, 1.8) are explicitly Hermite, circular, Laguerre, Jacobi in that order. (Odd sections have discussions related to these ensembles.) Forrester's comprehensive book \cite{forrester2010log} deals exclusively Hermite, Laguerre, Jacobi and circular ensembles in Chapter 1-3 where the preface states: ``eigenvalue p.d.f. of the various classical $\beta$-ensembles given in Chapter 1-3." Then, later in Chapter 5.4, he further justifies the terminology by pointing out the four weights from classical orthogonal polynomial theory. 
    \item Anderson, Guionnet, Zeitouni \cite{anderson2010introduction}: Chapter 4.1 is entitled ``Joint distribution of eigenvalues in the classical matrix ensembles" and specifically covers exactly the Hermite, Laguerre, Jacobi, and circular ensembles.
    \item The first author's 2005 \textit{Acta Numerica} article \cite{edelman2005random}, Section 4.
\end{itemize}}
in random matrix theory agree that the ensembles in this class consist of the Hermite, Laguerre, Jacobi and circular ensembles built from matrices of integer sizes and involve entries that are real, complex, or quaternion. (What Dyson denoted $\beta=1,2,4$, and other authors in mathematics denote $\alpha=2/\beta=2,1,1/2$.)




\par If one starts with the list of ten infinite families of Cartan's symmetric spaces\footnote{We will not discuss finite families of the exceptional types.} and asks to characterize which classical random matrix ensembles are covered, answers could be found in \cite[Table 1]{caselle1996new}, \cite[Table 1]{ivanov2002random} (noncompact cases) and \cite[Table 1]{duenez2004random} (compact cases). However, turning the question around,  if one starts with the classical random matrix ensembles, and asks whether symmetric spaces are adequate to explain all of them, we find the answer is a big ``almost", as the Jacobi ensembles are not adequately covered. To be precise, the Jacobi densities associated to compact symmetric spaces BD$\RN{1}$, A$\RN{3}$, C$\RN{2}$ from the previous attempts by the KAK decomposition are the following joint probability densities with $\beta=1,2,4$ (up to constant) and integers $p\geq q$:
\begin{equation}\label{eq:jacobiincompleteparameter}
    \text{KAK decomposition}:\hspace{.2cm}\prod_{j<k}|x_j - x_k|^\beta\displaystyle\prod_{j=1}^q x_j^{\frac{\beta}{2} - 1}(1-x_j)^{\frac{\beta(p-q+1)}{2} - 1},
\end{equation}
which we observe the powers of $x_j$'s restricted to $\frac{\beta}{2}-1$. The possible parameters of \eqref{eq:jacobiincompleteparameter} are described in the left side of Figure \ref{fig:newjacobi}. Additional four compact symmetric spaces D$\RN{3}$, BD, C, C$\RN{1}$ add four more Jacobi ensembles \cite{duenez2004random} but they are not sufficient to cover the two dimensional parameter set of the Jacobi ensembles. 

\subsection{Coordinate systems on the Grassmannian manifold}\label{sec:introGSVD}

It is always interesting when a branch of applied mathematics reverses direction and provides guidance to pure mathematics. In this work, we focus on the role of the generalized singular value decomposition (GSVD) from numerical linear algebra \cite{paige1981gsvd,van1976gsvd}. 

\par From an applied viewpoint, the Jacobi ensembles are elegantly generated in software with commands such as \verb|svdvals(randn(p,s),randn(q,s))| in languages such as Julia, which is computed by taking the GSVD of two i.i.d.\ normal matrices with the same number of columns \cite{edelman2008beta,edelman2018random}. From a pure viewpoint this is a pushforward of the uniform measure on the Grassmannian manifold
onto a maximal abelian subgroup $A$ (with a fixed Weyl chamber) along the generalized Cartan ($\kak$) decomposition \cite{flensted1978spherical,hoogenboom1983generalized}. 

\begin{figure}[h]
    \centering
    \fbox{\includegraphics[width=3.4in]{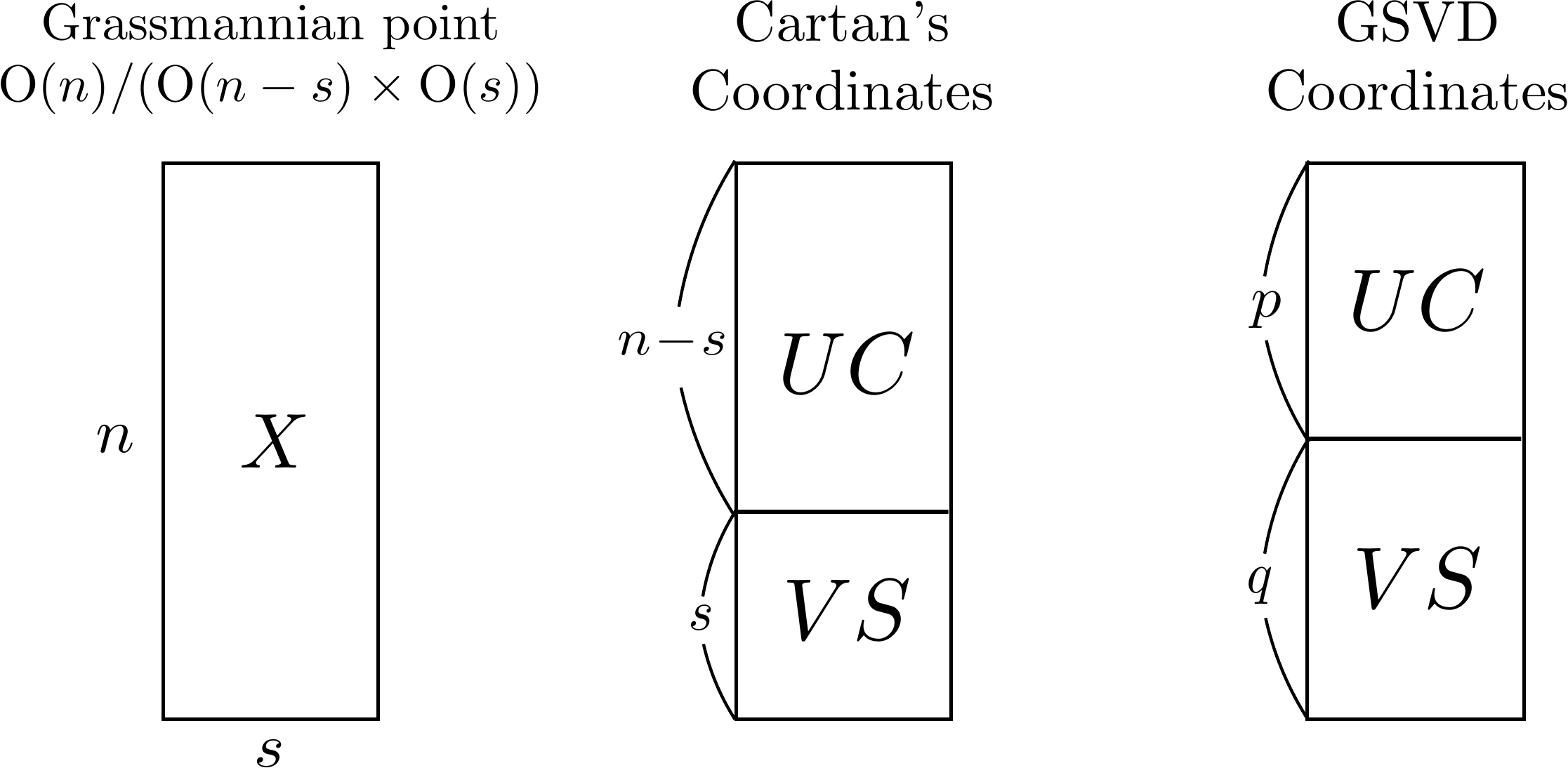}}
    \vspace{-0.3cm}\\\caption{Cartan's coordinate system (KAK) and GSVD coordinate systems ($\kak$) on the Grassmannian manifold $\ortho{n}/(\ortho{n-s}\times\ortho{s})$ }
    \label{fig:gsvdcoordinates}
\end{figure}

\par For example, take a Grassmannian point with any $\beta=1,2,4$ from $\ortho{n}/(\ortho{n-s}\times\ortho{s})$ (resp. with complex or quaternionic unitary groups) and represent it by the $n\times s$ orthogonal (resp. complex or quaternionic unitary) matrix\footnote{View the Grassmannian manifold as the quotient $V_s(\mathbb{R}^n)/\ortho{s}$ where $V_s(\mathbb{R}^n)$ is the Stiefel manifold. We are allowed to multiply any $O\in\ortho{s}$ on the right side of $X$.} $X$. For any $p, q \geq s$ satisfying $p+q = n$, we have the following coordinate system of $X$ arising from the GSVD\footnote{Alternatively, one can imagine the partial format of the CS decomposition. This is also equivalent to the bi-Stiefel decomposition with another quotient by the orthogonal group on the right.} \cite{edelman2020gsvd} of the first $p$ rows and the last $q$ rows of $X$:
\begin{equation}\label{eq:introgsvdcoordinates}
    X = \begin{bmatrix} U & \\ & V \end{bmatrix}\begin{bmatrix}C \\ S\end{bmatrix} = \begin{bmatrix}UC\\VS\end{bmatrix}.
\end{equation}
where $U, V$ are $p\times s$, $q\times s$ orthogonal (resp. complex or quaternionic unitary) matrices and $C, S$ are $s\times s$ diagonal matrices with cosine and sine values. Deduced joint probability densities \cite{edelman2018random} ($p, q\geq s$) are the following: (up to constant)
\begin{equation*}
    \text{$\kak$ decomposition (GSVD)}:\hspace{0.2cm}\prod_{j<k}|x_j - x_k|^\beta\prod_{j=1}^s x_j^{\frac{\beta(q-s+1)}{2}-1} (1-x_j)^{\frac{\beta(p-s+1)}{2}-1},
\end{equation*}
where the case $q=s$ represents the usual KAK decomposition case \eqref{eq:jacobiincompleteparameter}. 

As can be seen, the classical Jacobi parameters are quantized as they are integer multiples of $\beta/2$. Random matrix models that remove this quantization, thereby going beyond the classical, appear in \cite{dumitriu2002matrix,edelman2008beta,killip2004matrix}. In Section \ref{sec:mixed}, we also illustrate that some Jacobi ensembles can arise from symmetric spaces that are outside the traditional quantization (Figure \ref{fig:newjacobi2}).

\subsection{Contributions of this paper}

\begin{figure}[h]
    \centering
    {\large Example: Two symmetric spaces and Six corresponding RMTs\\[3pt]}
    \fbox{\includegraphics[width=4in]{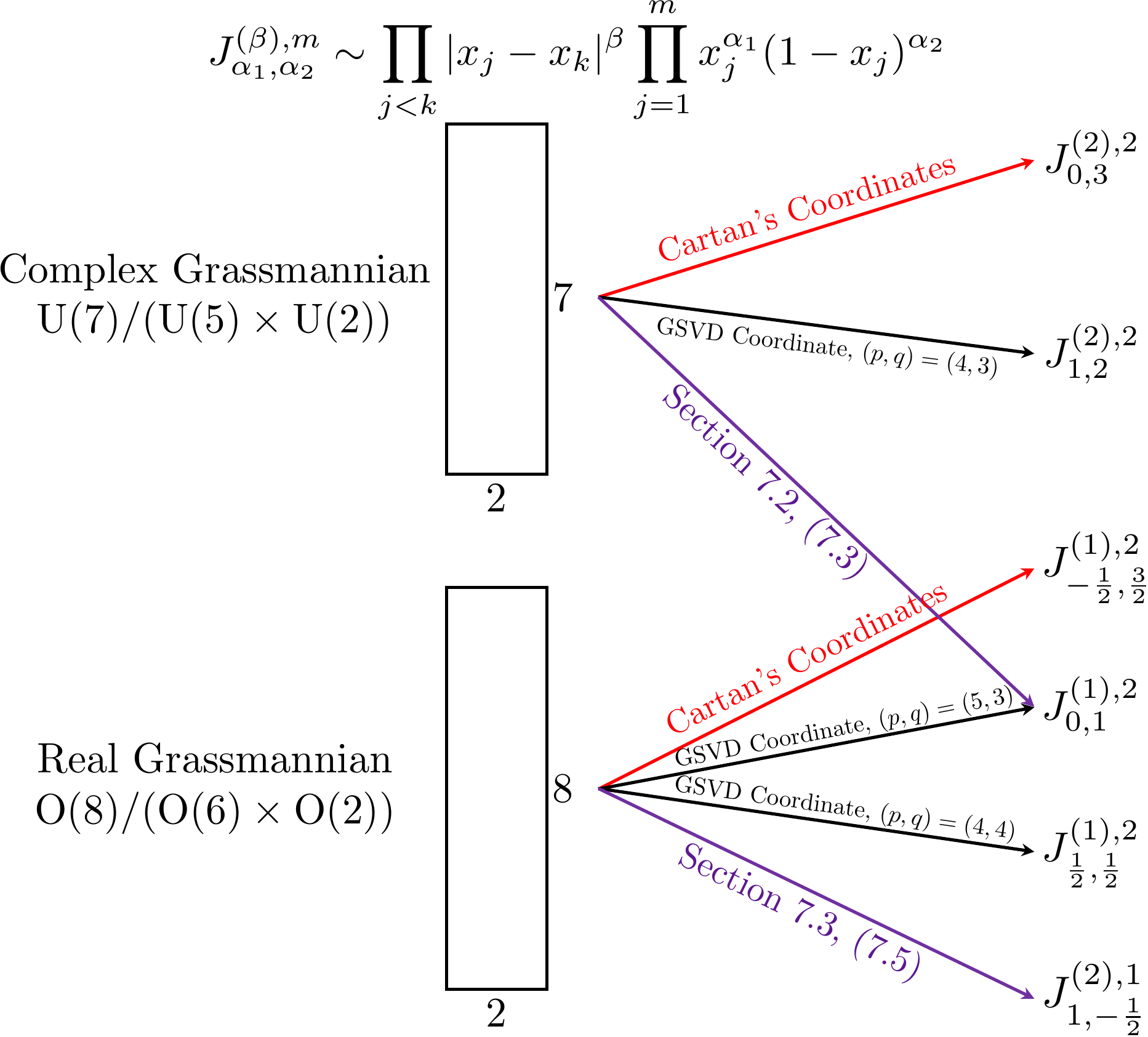}}
    \caption{Examples illustrating the lack of a one-to-one relationship between symmetric spaces and classical random matrix theories: A complex Grassmannian (top) obtains three Jacobi ensembles. A real Grassmannian (bottom) obtains four Jacobi ensembles. Particularly, the $\beta=1$ Jacobi ensemble $J_{0,1}^{(1),2}$ can be obtained from both symmetric spaces. Interestingly, a complex Grassmannian can lead to ({\color{violet}top purple}) a real RMT in the sense that $\beta=1$. Similarly a real Grassmannian obtains $\beta=2$ RMT ({\color{violet}bottom purple}).}
    \label{fig:manytomany}
\end{figure}

\par This work shows that  a symmetric space can be associated to multiple random matrix theories (Figure \ref{fig:manytomany}). Letting go of the arbitrariness of the choice of the KAK decomposition coordinate system allows us to choose other coordinate systems on symmetric spaces, thereby leading us to the complete list of classical random matrix ensembles (Sections \ref{sec:circular}, \ref{sec:jacobi}, \ref{sec:hermite}, \ref{sec:Laguerre}). Many of these coordinate systems are sometimes better known as matrix factorizations, used widely in matrix models of the classical ensembles \cite{dumitriu2002matrix,edelman2005random,edelman2008beta,forrester2010log,killip2004matrix}. However, in Section \ref{sec:mixed}, we compute new families of the Jacobi ensemble parameters from coordinate systems that have not been known before. 

\par This work also endeavors to make the Lie theory more widely accessible, by simplifying and modernizing key ideas and proofs in \cite{Helgason1984}. Cartan's theory \cite{Cartan1,cartan1927certaines,Cartan2,Cartan3} as developed by Helgason \cite{Helgason1978,Helgason1984} is a crowning mathematical achievement, and it is our hope to open up this theory for the benefit of all. Indeed, in \cite[p. 428]{segel2009recountings}, Helgason writes about the difficulty of understanding Cartan's writings:
\begin{quote}
    \small\textit{[Cartan] was one of the great mathematicians of the period, but his papers were quite a challenge. Hermann Weyl, in reviewing a book by Cartan from 1937 writes: ``Cartan is undoubtedly the greatest living master in differential geometry... I must admit that I found the book like most of Cartan's papers, hard reading."}
\end{quote}
In the same vein, while we are admirers of Helgason's extensive work, we authors must admit that we in turn found \cite{Helgason1978,Helgason1984} hard reading as well, and this paper attempts to introduce the theory by couching the ideas in terms of what we call ping pong operators. 

\par Summarizing our work:
\begin{itemize}
    \item We use the coordinate systems of the $\kak$ decomposition which connects a single symmetric space to multiple random matrices (Figure \ref{fig:manytomany}), completing the list of associated classical random matrix ensembles.
    \item  We translate some of the key concepts in Cartan's theory of symmetric spaces into easier to follow linear algebra (Section \ref{sec:LinAlg}). 
    \item  We provide coordinate systems (matrix factorizations) of symmetric spaces that have not been discussed in random matrix context, obtaining new parameter families of Jacobi ensemble (Section \ref{sec:mixed}).
\end{itemize}





\section{Background}\label{sec:background}

\subsection{Joint densities of classical random matrix ensembles} 
\par Dyson introduced the $\beta=1,2,4$ circular ensembles \cite{dyson1962a,dyson1962threefold} in 1962. Earlier expositions on circular ensembles could be found on Hurwitz \cite{hurwitz1963ueber}, and Weyl \cite{weyl1946classical}. Hermite ensembles were introduced by Wigner\footnote{\cite{mehta1960statistical} credits \cite{hsu1939distribution} for the GOE. In fact \cite{hsu1939distribution} has the Jacobian for the symmetric real eigenvalue problem, and indeed works with $AA^T$ where \verb|A=randn(m,n)| but does not work with $A+A^T$. No doubt \cite{hsu1939distribution} could have instantly written down the GOE distribution if he had only been asked.} \cite{wigner1955characteristic,wigner1958distribution}. Laguerre and Jacobi ensembles could be found as early as 1939 in the statistics literature by Fisher \cite{fisher1939sampling}, Roy \cite{roy1939p} or Hsu \cite{hsu1939distribution}. The physics literature first touches upon the idea of Laguerre and Jacobi with the 1963 thesis of Leff \cite{leffthesis}. The following list is the joint probability densities (without normalization constants) of classical random matrix ensembles $(\beta=1,2,4)$:
\begin{itemize}\small
    \item Circular : $\displaystyle\prod_{j<k}|e^{i\theta_j} - e^{i\theta_k}|^\beta$, \hspace{1cm} $(\theta_1, \dots, \theta_n)\in[0, 2\pi)^n$ \vspace{-0.1cm}
    \item Hermite : $\displaystyle\prod_{j<k}|\lambda_j - \lambda_k|^\beta e^{-\sum\frac{\lambda^2_j}{2}}$ \hspace{1cm} $(\lambda_1, \dots, \lambda_n)\in\mathbb{R}^n$\vspace{-0.1cm}
    \item Laguerre : $\displaystyle\prod_{j<k}|\lambda_j - \lambda_k|^\beta\displaystyle\prod_{j=1}^m\lambda_j^\alpha e^{-\sum\frac{\lambda_j}{2}}$\hspace{1cm} $(\lambda_1, \dots, \lambda_m)\in[0, \infty)^m$\vspace{-0.1cm}
    \item Jacobi : $\displaystyle\prod_{j<k}|x_j - x_k|^\beta\displaystyle\prod_{j=1}^m x_j^{\alpha_1}(1-x_j)^{\alpha_2}$\hspace{1cm} $(x_1, \dots, x_m)\in[0, 1]^m$
\end{itemize}
In particular the parameters $\alpha, \alpha_1, \alpha_2 > -1$ are quantized as integer multiples of $\frac{\beta}{2}$, i.e., $\frac{\beta}{2}(N+1) - 1$ for some nonnegative integer $N$. 

\subsection{Symmetric space and the generalized Cartan decomposition}\label{sec:gencartanbackground}

\par In this section we introduce the theory related to the generalized Cartan decomposition. For readers without preliminary knowledge in Lie theory, we recommend skipping to Section \ref{sec:LinAlg} which follows a more modern linear algebra approach.

\par Let $G/K_\sigma$ be a Riemannian symmetric space with a real reductive noncompact Lie group $G$ and its maximal compact subgroup $K_\sigma$. Let $\sigma$ be the Cartan involution on $\mathfrak{g}:=\text{Lie}(G)$. Then $\mathfrak{g} = \mathfrak{k}_\sigma+\mathfrak{p}_\sigma$ is the Cartan decomposition. Let $\tau$ be another involution on $\mathfrak{g}$ such that $\tau\sigma = \sigma\tau$ and let $\mathfrak{g} = \mathfrak{k}_\tau+\mathfrak{p}_\tau$ be the $\pm 1$ eigenspace decomposition by $\tau$. Denote by $K_\tau$ the analytic subgroup of $G$ with tangent space $\mathfrak{k}_\tau$. Let $\mathfrak{a}$ be a maximal abelian subalgebra of $\mathfrak{p}_\tau\cap\mathfrak{p}_\sigma$ and define $A := \exp(\mathfrak{a})$. We introduce the (noncompact) \textit{generalized Cartan decomposition} \cite[Theorem 4.1]{flensted1978spherical}.
\begin{theorem}[generalized Cartan decomposition, $\kak$ decomposition]\label{thm:noncompactk1ak2decomp}
With the above setting, we have the following decomposition of $G$:
\begin{equation}
G = K_\tau A K_\sigma.
\end{equation}
That is, for any $g\in G$, we have $k_1\in K_\tau, k_2\in K_\sigma$ and $a \in A$ such that $g = k_1 a k_2$. 
\end{theorem}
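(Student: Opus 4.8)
The plan is to reduce the statement to a statement about the symmetric space $G/K_\sigma$ equipped with the $\tau$-action, and to run a root-theoretic argument that picks out a maximal abelian subalgebra of $\mathfrak{p}_\tau \cap \mathfrak{p}_\sigma$ through a distance-minimization (or critical point) argument on orbits. Concretely, since $\sigma$ and $\tau$ commute, both act simultaneously, and $\mathfrak{g}$ decomposes into the four joint eigenspaces $\mathfrak{g}^{\varepsilon,\delta}$ for $(\sigma,\tau)$-eigenvalues $(\varepsilon,\delta)\in\{\pm1\}^2$; in particular $\mathfrak{p}_\sigma \cap \mathfrak{p}_\tau = \mathfrak{g}^{-1,-1}$ and $\mathfrak{a}$ is a maximal abelian subalgebra there. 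First I would observe that $K_\tau$ acts on $\mathfrak{p}_\sigma$ (because $\tau$ preserves $\mathfrak{p}_\sigma$, hence so does the connected group $K_\tau$), and that proving $G = K_\tau A K_\sigma$ is equivalent to proving $\mathrm{Ad}(K_\tau)\cdot\mathfrak{a}^{+}$ meets every $\mathrm{Ad}(K_\sigma)$-orbit in $\mathfrak{p}_\sigma$ appropriately — i.e., to a statement $\mathfrak{p}_\sigma = \mathrm{Ad}(K_\tau)\,(\mathfrak{a}\oplus(\mathfrak{p}_\sigma\cap\mathfrak{k}_\tau))$ or its group-level analogue via the Cartan decomposition $G = \exp(\mathfrak{p}_\sigma)K_\sigma$.

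The main steps, in order, are: (1) use the Cartan decomposition $G = \exp(\mathfrak{p}_\sigma) K_\sigma$ to reduce to showing every $X \in \mathfrak{p}_\sigma$ can be written as $\mathrm{Ad}(k_1) Y$ with $k_1 \in K_\tau$ and $Y \in \mathfrak{a} \oplus (\mathfrak{k}_\tau \cap \mathfrak{p}_\sigma)$, so that $\exp(X) = k_1 \exp(Y) k_1^{-1}$ and $\exp(Y)$ lands in $A\cdot(K_\tau\cap\exp(\mathfrak{p}_\sigma))$ — the latter needing a short separate check that $K_\tau \cap \exp(\mathfrak{p}_\sigma) \subseteq K_\tau$ can be absorbed. (2) Fix $X \in \mathfrak{p}_\sigma$ and, using compactness of $K_\tau$, choose $k_1 \in K_\tau$ maximizing (or extremizing) the functional $k \mapsto \langle \mathrm{Ad}(k) X, H_0\rangle$ for a suitable regular $H_0 \in \mathfrak{a}$, where $\langle\,,\,\rangle$ is the $\sigma$-invariant inner product $-B(\cdot,\sigma\cdot)$. (3) Differentiate at the extremum: for all $Z \in \mathfrak{k}_\tau$, $\langle [Z, \mathrm{Ad}(k_1)X], H_0\rangle = 0$, i.e. $\langle \mathrm{Ad}(k_1)X, [H_0, Z]\rangle = 0$; since $[H_0,\mathfrak{k}_\tau]$ can be shown to span the complement of $\mathfrak{a}\oplus(\mathfrak{k}_\tau\cap\mathfrak{p}_\sigma)$ inside $\mathfrak{p}_\sigma$ (this is where regularity of $H_0$ within the restricted-root system of $\mathfrak{a}$ on $\mathfrak{g}$ gets used, together with the $(\sigma,\tau)$-eigenspace bookkeeping), conclude $\mathrm{Ad}(k_1)X \in \mathfrak{a}\oplus(\mathfrak{k}_\tau\cap\mathfrak{p}_\sigma)$. (4) Finally, on the abelian piece, handle the leftover $\mathfrak{k}_\tau\cap\mathfrak{p}_\sigma$ term: show $\exp$ of it lies in $K_\tau\cap K_\sigma$, or re-absorb it by a second extremization/iteration, and collect constants into $k_2\in K_\sigma$.

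The hard part will be Step (3): showing that the root-space decomposition of $\mathfrak{g}$ under $\mathrm{ad}(\mathfrak{a})$ is compatible with the commuting involutions $\sigma,\tau$ in a way that makes $[H_0,\mathfrak{k}_\tau] + (\mathfrak{a}\oplus(\mathfrak{k}_\tau\cap\mathfrak{p}_\sigma)) = \mathfrak{p}_\sigma$ for regular $H_0$. This requires knowing that restricted roots $\alpha$ pair $\sigma$-eigenspaces with $\sigma$-eigenspaces correctly (since $\sigma$ negates $\mathfrak{a}$, $\sigma$ swaps the $\alpha$ and $-\alpha$ root spaces) and likewise for $\tau$, so that within each root-space pair one can solve for the $\mathfrak{k}_\tau$-direction needed to move any $\mathfrak{p}_\sigma$-component into $\mathfrak{a}$; the regularity of $H_0$ ensures $\mathrm{ad}(H_0)$ is invertible on every nonzero root space so no direction is lost. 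A clean way to organize this is to phrase it abstractly: the two commuting involutions make $(\mathfrak{g},\sigma,\tau)$ into a "$\mathbb{Z}_2\times\mathbb{Z}_2$-graded" symmetric Lie algebra, and the decomposition is exactly Theorem 4.1 of Flensted-Jensen; I would either cite that directly or, in the spirit of the paper's linear-algebra reformulation, give the above extremization argument as a self-contained proof, deferring the routine root-space case analysis and the convergence/compactness details to a lemma.
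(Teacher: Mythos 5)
The paper does not actually prove this theorem; it states it and cites Flensted-Jensen \cite[Theorem 4.1]{flensted1978spherical}, so ``cite it directly'' is precisely what the authors do. Your attempt at a self-contained proof, however, has a genuine gap in Step (2): you invoke ``compactness of $K_\tau$'' to choose $k_1$ extremizing $k \mapsto \langle \mathrm{Ad}(k)X, H_0\rangle$, but in the noncompact setting of this theorem $K_\tau$ is in general \emph{not} compact. Here $\sigma$ is the Cartan involution (so $K_\sigma$ is the maximal compact subgroup), while $\tau$ is merely a commuting involution; thus $\mathfrak{k}_\tau$ typically contains noncompact directions $\mathfrak{k}_\tau \cap \mathfrak{p}_\sigma$, and $K_\tau$ is compact only if that intersection vanishes. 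The paper itself uses this freedom: in Section~\ref{sec:noncompactHSVD} it takes $(G, K_\sigma, K_\tau) = (\gl{n,\mathbb{R}}, \ortho{n}, \ortho{p,q})$, and $\ortho{p,q}$ is noncompact. Since $\mathrm{Ad}(k)$ for $k\in K_\tau$ does not preserve the $\sigma$-invariant norm, your linear functional is generically unbounded on $K_\tau$, and the extremum you need may simply not exist.

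There is a second, related difficulty in your Steps (1) and (4). Even if you could conjugate $X$ into $\mathfrak{a}\oplus(\mathfrak{k}_\tau\cap\mathfrak{p}_\sigma)$, the two summands do not commute --- one checks $[\mathfrak{a},\ \mathfrak{k}_\tau\cap\mathfrak{p}_\sigma]\subseteq \mathfrak{k}_\sigma\cap\mathfrak{p}_\tau$ --- so $\exp(Y_1+Y_2)\neq\exp(Y_1)\exp(Y_2)$ and does not visibly lie in $A\cdot K_\tau$ or $A\cdot K_\sigma$; your ``re-absorb by a second extremization/iteration'' is where the remaining work actually lives. The standard way to make this rigorous (and essentially what Flensted-Jensen does) is to replace the linear functional by the Riemannian distance on the Hadamard manifold $G/K_\sigma$ from the base point to the $K_\tau$-orbit of $\exp(X)K_\sigma$: this is proper (the orbit is closed since $K_\tau$ is a reductive subgroup) and geodesically convex, so the minimum exists without any compactness of $K_\tau$. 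The first-order condition at the minimizer then forces $\mathrm{Ad}(k_1^{-1})X\in\mathfrak{p}_\sigma\cap\mathfrak{p}_\tau$ (not merely $\mathfrak{a}\oplus(\mathfrak{k}_\tau\cap\mathfrak{p}_\sigma)$), after which a ``baby Cartan'' step $\mathfrak{p}_\sigma\cap\mathfrak{p}_\tau=\mathrm{Ad}(K_\sigma\cap K_\tau)\,\mathfrak{a}$ finishes: the conjugating element lies in $K_\sigma\cap K_\tau\subseteq K_\sigma$, so it can be absorbed on the right. If you want the argument to be self-contained you should restructure it along those lines; as written, it fails exactly where the noncompactness of $K_\tau$ enters.
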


We often use the equivalent name ``$\kak$ decomposition" for simplicity. Note that if $\tau=\sigma$ (i.e., $K = K_\sigma = K_\tau$) we recover the usual KAK decomposition, $G = K A K$. The generalized Cartan decomposition in Flensted-Jensen's work \cite{flensted1978spherical} is originally intended for the case where $G$ is noncompact. The compact analogue is developed by Hoogenboom \cite[Theorem 3.6]{hoogenboom1983generalized}. 

\begin{theorem}[generalized Cartan decomposition, compact case]\label{thm:compactk1ak2decomp}
Let $G/K_\sigma$ and $G/K_\tau$ be two compact Riemannian symmetric spaces. Let $\mathfrak{g} = \mathfrak{k}_\sigma+\mathfrak{p}_\sigma$ and $\mathfrak{g} = \mathfrak{k}_\tau+\mathfrak{p}_\tau$ be the corresponding eigenspace decompositions of $\mathfrak{g}=\lie(G)$. Then, for a maximal abelian subalgebra $\mathfrak{a}$ of $\mathfrak{p}_\sigma\cap\mathfrak{p}_\tau$ and $A = \exp(\mathfrak{a})$ we have
\begin{equation*}
    G = K_\tau A K_\sigma.
\end{equation*}
\end{theorem}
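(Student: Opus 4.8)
The plan is to deduce the compact generalized Cartan decomposition (Theorem~\ref{thm:compactk1ak2decomp}) from the noncompact one (Theorem~\ref{thm:noncompactk1ak2decomp}) via Flensted-Jensen's duality principle, rather than proving it from scratch. Given the two commuting involutions $\sigma$ and $\tau$ on the compact group $G$, pass to the complexification $\mathfrak{g}_{\mathbb{C}}$ and form the \emph{dual} real form $\mathfrak{g}^* = \mathfrak{k}_\sigma + i\,\mathfrak{p}_\sigma$, which is a real reductive Lie algebra carrying a Cartan involution (still called $\sigma$) with maximal compact subalgebra $\mathfrak{k}_\sigma$. Because $\tau$ commutes with $\sigma$, it preserves both $\mathfrak{k}_\sigma$ and $\mathfrak{p}_\sigma$, hence descends to an involution of $\mathfrak{g}^*$ commuting with the Cartan involution; one then checks that $\mathfrak{a} \subseteq \mathfrak{p}_\sigma\cap\mathfrak{p}_\tau$ becomes, after the $\times i$ twist on the $\mathfrak{p}_\sigma$-part, a maximal abelian subspace $i\mathfrak{a}$ of $\mathfrak{p}^*_\sigma \cap \mathfrak{p}^*_\tau$ in the dual. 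Applying Theorem~\ref{thm:noncompactk1ak2decomp} to $G^*$ gives $G^* = K_\tau^* \exp(i\mathfrak{a}) K_\sigma$, and transporting this factorization back through the common complexification $G_{\mathbb{C}}$ (where both $G$ and $G^*$ sit) yields $G = K_\tau A K_\sigma$ with $A = \exp(\mathfrak{a})$.

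**The key steps, in order, are:** (1) set up the complexification $G_{\mathbb{C}}$ and verify that $\sigma, \tau$ extend to commuting holomorphic involutions; (2) construct the dual real form $G^*$ with Lie algebra $\mathfrak{k}_\sigma + i\mathfrak{p}_\sigma$ and confirm $\tau$ acts on it as an involution commuting with the Cartan involution; (3) match the abelian pieces, showing $i\mathfrak{a}$ is maximal abelian in $\mathfrak{p}^*_\sigma \cap \mathfrak{p}^*_\tau$ (this uses that $\mathfrak{a}$ was maximal abelian in $\mathfrak{p}_\sigma\cap\mathfrak{p}_\tau$ and the $\times i$ map is a linear isomorphism intertwining brackets appropriately up to sign); (4) invoke Theorem~\ref{thm:noncompactk1ak2decomp} to get $G^* = K_\tau^* \exp(i\mathfrak{a})\, K_\sigma$; (5) complexify each factor inside $G_{\mathbb{C}}$ and intersect with $G$, using that $K_\sigma$ is common to $G$ and $G^*$, that $\exp(i\mathfrak{a})$ in $G^*$ and $\exp(\mathfrak{a})=A$ in $G$ generate the same complex torus $\exp(\mathfrak{a}_{\mathbb{C}})$, and that $K_\tau$ in $G$ and $K_\tau^*$ in $G^*$ have the same complexification. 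A surjectivity/covering argument then transfers the decomposition to the compact $G$.

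**The main obstacle** I anticipate is step~(5): the passage from the noncompact factorization to the compact one is not a mere formal substitution, because the complexified factors can overlap and one must argue that the product map $K_\tau \times A \times K_\sigma \to G$ is onto. Flensted-Jensen's own argument handles this by a connectedness/dimension count combined with the fact that the image is closed (compact groups) and open (the differential of the product map is surjective at suitable points, which reduces to the Lie algebra identity $\mathfrak{g} = \mathfrak{k}_\tau + \mathfrak{a} + \mathrm{Ad}(a)\mathfrak{k}_\sigma$ for generic $a \in A$ — this is the root-space computation underlying the whole theory). Verifying that root-space sum identity in the compact setting, where roots are imaginary and $\sin$ replaces $\sinh$, requires checking that the relevant Jacobian factors $\prod |\sin\alpha(H)|^{m_\alpha^+}|\cos\alpha(H)|^{m_\alpha^-}$ do not vanish identically, i.e. that a regular element of $\mathfrak{a}$ exists; this is where the genuine content lies, and it is exactly parallel to the classical proof that $G = KAK$ for compact $G$. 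An alternative, more self-contained route would bypass duality entirely and run the compact argument directly: establish the Lie algebra decomposition $\mathfrak{g} = \mathfrak{k}_\tau + \mathrm{Ad}(a)\mathfrak{k}_\sigma + \mathfrak{a}$ at a regular $a$, conclude the product map is a submersion somewhere hence has open image, and then use compactness (image is also closed) plus connectedness of $G$ to conclude surjectivity — but this duplicates Hoogenboom's work, so deriving it from the noncompact theorem already in hand is the cleaner exposition.
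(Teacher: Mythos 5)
The paper states Theorem~\ref{thm:compactk1ak2decomp} as background and cites Hoogenboom~\cite{hoogenboom1983generalized} for the proof; it gives no argument of its own. Hoogenboom's proof is a direct compact-case argument: establish the Lie-algebra identity $\mathfrak{g}=\mathfrak{k}_\tau+\mathfrak{a}+\mathrm{Ad}(a)\mathfrak{k}_\sigma$ at a regular $a\in A$, conclude the product map $K_\tau\times A\times K_\sigma\to G$ is a submersion there (hence has open image), note the image is compact hence closed, and invoke connectedness of $G$.

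Your steps (1)--(3) are sound: $\tau$ restricts to $\mathfrak{g}^*=\mathfrak{k}_\sigma+i\mathfrak{p}_\sigma$, commutes with the Cartan involution there, and $\mathfrak{p}_\sigma^*\cap\mathfrak{p}_\tau^*=i(\mathfrak{p}_\sigma\cap\mathfrak{p}_\tau)$, so $i\mathfrak{a}$ is indeed maximal abelian. The genuine gap is in steps (4)--(5). ``Transporting this factorization back through the common complexification'' is not a valid move: $G^*=K_\tau^*A^*K_\sigma$ and $G=K_\tau AK_\sigma$ are global surjectivity statements about two different real forms of $G_{\mathbb{C}}$, and nothing in $G_{\mathbb{C}}$ propagates one to the other. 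Even setting aside covering issues (which do require a normalization of $G$ for both $K_\sigma$ and $K_\tau$ to embed compatibly), $A^*=\exp(i\mathfrak{a})\cong\mathbb{R}^r$ is a closed copy of Euclidean space on which $\exp$ is injective, while $A=\exp(\mathfrak{a})$ is a torus on which $\exp$ has a kernel; the two product maps are different maps whose surjectivity must be established separately. What duality legitimately transports is the \emph{infinitesimal} data --- the restricted roots, the multiplicities $m_\alpha^\pm$, and hence the regular-point identity $\mathfrak{g}=\mathfrak{k}_\tau+\mathfrak{a}+\mathrm{Ad}(a)\mathfrak{k}_\sigma$ --- since these live in $\mathfrak{g}_{\mathbb{C}}$ and are visible from either real form (one replaces $\sinh\alpha(H)\neq 0$ by $\sin\alpha(H)\neq 0$, both generic). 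But once you have that, you still must run the open-plus-closed-plus-connected argument on the compact $G$, which is exactly Hoogenboom's proof. Your own closing paragraph contains the correct argument; it should be the main route rather than the ``alternative.'' The duality reduction shortens the root-space computation but does not shortcut the global surjectivity step, so your proposal as written stops short of a proof.
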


From the space of linear functionals $\mathfrak{a}^*$, we collect eigenvalues of an adjoint representation (the commutator) of $\mathfrak{a}$ on $\mathfrak{g}$ and call these eigenvalues the roots of the $\kak$ decomposition. By fixing the  Weyl chamber, we obtain a set of positive roots $\Sigma^+$. Details of the theory of the $\kak$ decomposition and its root system can be found in Flensted-Jensen \cite{flensted1978spherical,flensted1980discrete}, Hoogenboom \cite{hoogenboom1983generalized}, Matsuki \cite{matsuki1995double,matsuki1997double,matsuki2002classification} and Kobayashi \cite{kobayashi2007generalized}. The $\kak$ decomposition is also studied in the context of spherical harmonics and intertwining functions \cite{hoogenboom1983intertwining,james1974generalized}. Refine the root space $\mathfrak{g}_\alpha$ of a root $\alpha$ by $\pm 1$ eigenspaces of $\sigma \tau$. Let the two dimensions be $m_{\alpha}^{\pm}$. 

Let $dk_\sigma$, $dk_\tau$ be the Haar measures of $K_\sigma$, $K_\tau$, respectively. Let $dH$ be the Euclidean measure on $\mathfrak{a}$. The Jacobian of the $\kak$ decomposition is the following.

\begin{theorem}[Jacobian of the $\kak$ decomposition \cite{flensted1980discrete,hoogenboom1983generalized}]\label{thm:k1ak2jacobians} 
Let $dg$ be the Haar measure on $G$ and let $H\in\mathfrak{a}$. We have the Jacobian and the integral formula corresponding to the change of variables associated with the $\kak$ decomposition,
\begin{equation*}
    \int_G f(g)dg = \int_{K_\tau}\int_{K_\sigma}\int_{\mathfrak{a}^+} f(k_\sigma \exp(H) k_\tau) d\mu(H)dk_\sigma dk_\tau,
\end{equation*}
where for noncompact $G$
\begin{equation}\label{eq:noncompactk1ak2jac}
    d\mu(H) \propto \prod_{\alpha\in\Sigma^+}(\sinh\alpha(H))^{m_{\alpha}^+} (\cosh\alpha(H))^{m_{\alpha}^-}dH,
\end{equation}
and for compact $G$
\begin{equation}\label{eq:compactk1ak2jac}
    d\mu(H) \propto \prod_{\alpha\in\Sigma^+}(\sin\alpha(H))^{m_{\alpha}^+} (\cos\alpha(H))^{m_{\alpha}^-}dH.
\end{equation}
\end{theorem}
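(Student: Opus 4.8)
\textbf{Proof proposal for Theorem \ref{thm:k1ak2jacobians} (Jacobian of the $\kak$ decomposition).}

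The plan is to compute the Jacobian of the smooth map $\Phi : K_\tau \times \mathfrak{a}^+ \times K_\sigma \to G$, $(k_1, H, k_2) \mapsto k_1 \exp(H) k_2$, by a pointwise infinitesimal computation of the pullback of the Haar density. First I would reduce to a local statement: since the $\kak$ decomposition holds (Theorems \ref{thm:noncompactk1ak2decomp} and \ref{thm:compactk1ak2decomp}) and is generically a local diffeomorphism off a measure-zero set (the walls of the Weyl chamber and the stabilizer directions), it suffices to show that at a point $g_0 = k_1 \exp(H) k_2$ with $H$ regular, the derivative $d\Phi$ has Jacobian determinant proportional to $\prod_{\alpha \in \Sigma^+}(\sinh \alpha(H))^{m_\alpha^+}(\cosh \alpha(H))^{m_\alpha^-}$ (resp. with $\sin, \cos$ in the compact case), with respect to fixed bi-invariant inner products on $\mathfrak{k}_\tau$, $\mathfrak{k}_\sigma$, $\mathfrak{a}$ and on $\mathfrak{g}$. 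Because Haar measure is bi-invariant, I can left-translate by $k_1^{-1}$ and right-translate by $k_2^{-1}$ and assume $g_0 = a := \exp(H)$; the tangent space computation then takes place at $a$.

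The key computation: identify $T_a G \cong \mathfrak{g}$ by right translation. Differentiating $\Phi$ at $(e, H, e)$ and transporting to $\mathfrak{g}$, a tangent vector $(X, Y, Z) \in \mathfrak{k}_\tau \times \mathfrak{a} \times \mathfrak{k}_\sigma$ maps to
\begin{equation*}
    X + Y + \mathrm{Ad}(a) Z
\end{equation*}
up to the identification (here $Y \in \mathfrak{a}$ contributes diagonally since $\mathfrak{a}$ is abelian). So the Jacobian is $|\det|$ of the linear map $\mathfrak{k}_\tau \oplus \mathfrak{a} \oplus \mathfrak{k}_\sigma \to \mathfrak{g}$, $(X, Y, Z) \mapsto X + Y + \mathrm{Ad}(a)Z$. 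Now decompose $\mathfrak{g}$ into root spaces $\mathfrak{g}_\alpha$ under $\mathrm{ad}(\mathfrak{a})$, together with the centralizer $\mathfrak{g}_0 = \mathfrak{z}_{\mathfrak g}(\mathfrak a)$. The centralizer part contributes the $\mathfrak{a}$-directions and the ``compact'' directions of $\mathfrak{g}_0$ that match up between $\mathfrak{k}_\tau$ and $\mathfrak{k}_\sigma$, contributing a constant (this is where genericity/regularity of $H$ is used — on $\mathfrak g_0$ the map is an isomorphism with $H$-independent determinant). On each root space $\mathfrak{g}_\alpha$, I would further split by the $\pm 1$ eigenspaces of $\sigma\tau$, of dimensions $m_\alpha^\pm$. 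Pairing $\mathfrak{g}_\alpha$ with $\mathfrak g_{-\alpha}$ (equivalently, using $\sigma$ to relate $\mathfrak g_\alpha$ and $\theta\mathfrak g_\alpha$), the relevant $2\times 2$ block sending a $\mathfrak{k}_\tau$-vector and a $\mathfrak{k}_\sigma$-vector into $\mathfrak g_\alpha \oplus \theta\mathfrak g_\alpha$ has the form $\begin{bmatrix} 1 & \pm e^{\alpha(H)} \\ 1 & \pm e^{-\alpha(H)}\end{bmatrix}$ after choosing bases compatible with $\sigma$, where the $\pm$ is the $\sigma\tau$-eigenvalue; its determinant is $\mp(e^{\alpha(H)} - e^{-\alpha(H)}) = \mp 2\sinh\alpha(H)$ on the $+1$ eigenspace and $\pm(e^{\alpha(H)} + e^{-\alpha(H)}) = \pm 2\cosh\alpha(H)$ on the $-1$ eigenspace. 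Multiplying over all positive roots $\alpha$ with multiplicities $m_\alpha^\pm$ yields $\prod_{\alpha \in \Sigma^+}(\sinh\alpha(H))^{m_\alpha^+}(\cosh\alpha(H))^{m_\alpha^-}$ up to a constant, which is the noncompact formula \eqref{eq:noncompactk1ak2jac}; in the compact case the roots are imaginary-valued, $\exp$ of $\mathrm{ad}$ produces $\sin$ and $\cos$ instead, giving \eqref{eq:compactk1ak2jac}.

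The main obstacle I anticipate is the bookkeeping around $\mathfrak{g}_0$ and the root-space pairing: making precise which subspace of $\mathfrak k_\tau$ (resp. $\mathfrak k_\sigma$) maps isomorphically onto which part of each $\mathfrak g_\alpha$, i.e., tracking how $\sigma$ and $\tau$ act simultaneously on the root spaces so that the $2\times 2$ blocks above are genuinely the full story and nothing is double-counted or missed. This requires knowing that $\mathfrak k_\tau$ and $\mathfrak k_\sigma$ each meet the ``$\mathfrak g_\alpha + \theta\mathfrak g_\alpha$'' plane in a line transverse to the image of $\mathrm{Ad}(a)$ on the other — which follows from $\sigma\tau$ commuting with $\mathrm{ad}(\mathfrak a)$ and preserving each $\mathfrak g_\alpha$, but needs to be stated carefully. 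A clean alternative that sidesteps some of this is to invoke the abstract integral formula for generalized Cartan decompositions (Flensted-Jensen \cite{flensted1980discrete}, Hoogenboom \cite{hoogenboom1983generalized}) and merely verify the density, but since the theorem is attributed to those references, the honest approach is the direct Jacobian computation sketched above. A secondary subtlety is justifying that the exceptional set (Weyl walls and points where $\Phi$ degenerates) has measure zero, so that the pointwise Jacobian integrates correctly; this is standard since $\mathfrak a^+$ is an open cone and the bad set is a finite union of hyperplanes and lower-dimensional strata.
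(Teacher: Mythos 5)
Your plan — translate to the base point $a=\exp H$, compute $d\Phi(X,Y,Z)=X+Y+\mathrm{Ad}(a)Z$, decompose $\mathfrak g$ into $\operatorname{ad}\mathfrak a$-root spaces, and extract $2\times 2$ determinants — is the standard route and is essentially what the cited sources \cite{flensted1980discrete,hoogenboom1983generalized} do. The paper does not prove Theorem~\ref{thm:k1ak2jacobians} directly (it is quoted); its Section~\ref{sec:LinAlg} sketches a related ``ping pong'' reformulation via the map $\Phi:K\times A\to G/K$ and a four-subspace version for $\kak$. So you are doing the group-level version directly rather than the symmetric-space version, which is a perfectly valid and arguably more direct route to the stated $\int_G$ formula.

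However, the explicit $2\times 2$ block you wrote is wrong, and as written it cannot produce the $\cosh$ factor. The matrix
$\begin{bmatrix} 1 & \pm e^{\alpha(H)} \\ 1 & \pm e^{-\alpha(H)}\end{bmatrix}$
has determinant $\mp\bigl(e^{\alpha(H)}-e^{-\alpha(H)}\bigr)=\mp 2\sinh\alpha(H)$ for \emph{either} sign, so both $\sigma\tau$-eigenspaces would give $\sinh$. The sign $\epsilon=\pm1$ (the $\sigma\tau$-eigenvalue) must go in the $\mathfrak k_\tau$ column, not the $\mathfrak k_\sigma$ column: if $\sigma\tau x=\epsilon x$ then $\tau x=\epsilon\,\sigma x$, so the $\mathfrak k_\tau$ basis vector is $x+\tau x=x+\epsilon\,\sigma x$ with coordinates $(1,\epsilon)$ in the $\{x,\sigma x\}$ basis, while $\mathrm{Ad}(a)(x+\sigma x)=e^{\alpha(H)}x+e^{-\alpha(H)}\sigma x$ has coordinates $\bigl(e^{\alpha(H)},e^{-\alpha(H)}\bigr)$ with no sign. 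The correct block is therefore
$\begin{bmatrix} 1 & e^{\alpha(H)} \\ \epsilon & e^{-\alpha(H)}\end{bmatrix}$,
with determinant $e^{-\alpha(H)}-\epsilon\,e^{\alpha(H)}$: this equals $-2\sinh\alpha(H)$ when $\epsilon=+1$ (contributing to $m_\alpha^+$) and $2\cosh\alpha(H)$ when $\epsilon=-1$ (contributing to $m_\alpha^-$), which is exactly \eqref{eq:noncompactk1ak2jac}. You state the correct determinants, so this looks like a transcription slip, but the matrix as written would not deliver the theorem. A smaller point: $d\Phi$ is never a local diffeomorphism — replacing $(k_1,k_2)$ by $(k_1 m^{-1}, m k_2)$ for $m$ in $Z_{K_\tau}(\mathfrak a)\cap Z_{K_\sigma}(\mathfrak a)$ always fixes $\Phi$, so $d\Phi$ has a fixed kernel in $\mathfrak g_0$. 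One should say ``submersion off the walls'' and compute the determinant on a complement of this kernel, with $\mathrm{vol}(M)$ absorbed into the constant of proportionality; your ``$\mathfrak g_0$ contributes an $H$-independent constant'' is pointing at the right thing but should be phrased this way to be airtight.
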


\par Similar results on the KAK decomposition and the restricted roots of symmetric spaces can be found in standard Lie group textbooks \cite{bump2004lie,gilmore2012lie,Helgason1978,Helgason1984,knapp2013lie}. In the KAK case, the Jacobian \eqref{eq:noncompactk1ak2jac} reduces down to $\prod{(\sinh\alpha(H))^{m_\alpha}}$ as we do not have $-1$ eigenspace of $\sigma\tau$ so that $m_\alpha = m_\alpha^+$ \cite{Helgason1978,kirillov1995representation,knapp2001representation}. 

\par Theorems \ref{thm:noncompactk1ak2decomp}, \ref{thm:compactk1ak2decomp}, and \ref{thm:k1ak2jacobians} are decompositions of the group $G$. These decompositions can also be applied to the symmetric space $G/K_\sigma$. The following map $\Phi$ is the $\kak$ decomposition of the Riemannian symmetric space $G/K_\sigma$. The map $\Phi$ is also called the Hermann action \cite{hermann1960variational,kollross2002classification}, nonstandard polar coordinates \cite{zirnbauer1995single}, non-Cartan parametrization \cite{caselle2004random}. In the KAK case $(K = K_\sigma=K_\tau)$, Helgason called this the polar coordinate decomposition \cite{Helgason1978} and credits Cartan \cite{cartan1927certaines} for this map. Since the $G$-invariant measure of $G/K$ inherits the Haar measure of $G$, the identical Jacobian is obtained for the decomposition of a symmetric space.\footnote{The actual development of the Jacobians \eqref{eq:noncompactk1ak2jac}, \eqref{eq:compactk1ak2jac} was done the other way around. In \cite{Helgason1962}, Helgason credits Cartan \cite{Cartan3} for the derivation of these Jacobians which was then computed only for symmetric spaces. The KAK decomposition was discovered later in 1950s and the Jacobians are identically extended from the decomposition of $G/K$ to the decomposition of $G$.}
\begin{theorem}[$\kak$ decomposition of $G/K_\sigma$]\label{thm:symspacejacobians}
Given a $\kak$ decomposition $G = K_\sigma A K_\tau$ with the Riemannian symmetric space $G/K_\sigma$, we have the map $\Phi$,
\begin{equation}\label{eq:symspacekaK}
    \Phi:K_\tau\times A\to G/K;\hspace{0.2cm} (k_\tau, a)\mapsto k_\tau aK.
\end{equation}
Suppose $H\in\mathfrak{a}$, $a = \exp(H)$. For the $G$-invariant measure $dx$ of $G/K_\sigma$, $dk_\tau = \text{Haar}(K_\tau)$ and the Euclidean measure $dH$ on $\mathfrak{a}$, $dx = dk_\tau d\mu(H)$ holds where the Jacobian $d\mu(H)$ is given in \eqref{eq:noncompactk1ak2jac} if $G$ is noncompact and \eqref{eq:compactk1ak2jac} if $G$ is compact. 
\end{theorem}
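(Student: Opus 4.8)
The plan is to obtain both parts of the statement---the well-definedness and surjectivity of $\Phi$, and the identity $dx=dk_\tau\,d\mu(H)$---as essentially formal consequences of the group decomposition (Theorems~\ref{thm:noncompactk1ak2decomp} and~\ref{thm:compactk1ak2decomp}) and its integral formula (Theorem~\ref{thm:k1ak2jacobians}), transported through the canonical projection $\pi\colon G\to G/K_\sigma$, $g\mapsto gK_\sigma$. No computation beyond Theorem~\ref{thm:k1ak2jacobians} should be required.

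First I would settle $\Phi$ itself. It is well defined since $k_\tau aK_\sigma$ depends only on the coset, and it is surjective because $G=K_\sigma AK_\tau$ is equivalent---on taking inverses, using $A^{-1}=A$---to $G=K_\tau AK_\sigma$; hence every $gK_\sigma$ can be written $k_\tau aK_\sigma$. I would also record that $\Phi$ is \emph{not} injective: the residual ambiguity is a Weyl-group action on the $A$-variable together with the action of the centralizer $M_\tau=Z_{K_\tau\cap K_\sigma}(A)$ on the $K_\tau$-variable. Restricting $a$ to the closed positive chamber $\exp(\overline{\mathfrak a^+})$ removes the Weyl ambiguity, and the remaining redundancy lives on a set of measure zero, so it does not affect the integral formula.

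Next, the measure identity. Because $G$ is reductive, hence unimodular, and $K_\sigma$ is compact, the quotient-integration (Weil) formula applies: with $dk_\sigma$ the probability Haar measure on $K_\sigma$ and $dx$ the $G$-invariant measure on $G/K_\sigma$,
\[
  \int_G f(g)\,dg \;=\; \int_{G/K_\sigma}\Bigl(\int_{K_\sigma} f(xk_\sigma)\,dk_\sigma\Bigr)dx ,
\]
i.e.\ $\pi_*(dg)=dx$. Apply this to $f=\tilde f\circ\pi$ with $\tilde f\in C(G/K_\sigma)$ (a right-$K_\sigma$-invariant function on $G$), and feed the same $f$ into Theorem~\ref{thm:k1ak2jacobians} written in the form $g=k_\tau\exp(H)k_\sigma$ (which is Theorem~\ref{thm:k1ak2jacobians} after the substitution $g\mapsto g^{-1}$, using unimodularity and Weyl invariance of $d\mu$). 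Since $\tilde f\bigl(k_\tau\exp(H)k_\sigma K_\sigma\bigr)=\tilde f\bigl(k_\tau\exp(H)K_\sigma\bigr)$ does not depend on $k_\sigma$ and $dk_\sigma$ has total mass one, the inner $K_\sigma$-integral collapses and
\[
  \int_{G/K_\sigma}\tilde f(x)\,dx \;=\; \int_{K_\tau}\int_{\mathfrak a^+} \tilde f\bigl(\Phi(k_\tau,\exp(H))\bigr)\,d\mu(H)\,dk_\tau ,
\]
with $d\mu$ as in \eqref{eq:noncompactk1ak2jac} for noncompact $G$ and as in \eqref{eq:compactk1ak2jac} for compact $G$. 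Letting $\tilde f$ range over a separating family of continuous functions forces $\Phi_*\bigl(dk_\tau\,d\mu(H)\bigr)=dx$ (up to the positive constant already implicit in Theorem~\ref{thm:k1ak2jacobians}), which is precisely $dx=dk_\tau\,d\mu(H)$ in the $(k_\tau,H)$ coordinates.

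The step that needs the most care---bookkeeping rather than depth---is matching the two ``Fubini'' decompositions without gaining or losing a constant or a multiplicity: one fixes a single Weyl chamber for $H$, uses that the fibres of $\pi$ are exactly the orbits $xK_\sigma$ carrying the normalized Haar measure, and cites the root-system theory of the $\kak$ decomposition (Flensted-Jensen, Hoogenboom, Matsuki) to know that the complement of the chamber is $dx$-null. Both the Weil formula (for unimodular $G$, compact $K_\sigma$) and the chamber/Weyl-group analysis are classical, so once quoted the equality $dx=dk_\tau\,d\mu(H)$ drops out with no new calculation.
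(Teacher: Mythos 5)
Your proposal is correct and takes essentially the same approach the paper sketches: the paper justifies Theorem~\ref{thm:symspacejacobians} in one sentence (``the $G$-invariant measure of $G/K$ inherits the Haar measure of $G$''), which is precisely the quotient-integration argument you spell out. Your extra care about the left/right ordering (inverting $g$ to reconcile the $k_\sigma\exp(H)k_\tau$ form of Theorem~\ref{thm:k1ak2jacobians} with the $k_\tau a K_\sigma$ form of $\Phi$) is a worthwhile clarification, since the paper's Theorems~\ref{thm:noncompactk1ak2decomp}, \ref{thm:k1ak2jacobians}, and \ref{thm:symspacejacobians} are not written with a consistent ordering of $K_\sigma$ and $K_\tau$.
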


\begin{remark}[Representing $G/K\cong P$: $gK$ (coset) or $p\in P$?]\label{rem:representations}
In the standard KAK decomposition, the Jacobian \eqref{eq:noncompactk1ak2jac} (resp. \eqref{eq:compactk1ak2jac}) only has $\sinh$ (resp. $\sin$) terms as we discussed above. This result could be found in many literature, where some authors \cite{flensted1980discrete,Helgason1962,Helgason1984,knapp2001representation} use $\prod\sinh\alpha(H)$ as the Jacobian, whereas other authors \cite{duenez2004random,kirillov1995representation,terras2016harmonic} use $\prod\sinh(\alpha(H)/2)$. This gap is due to the difference in the realization of a symmetric space $G/K$ as a subset $P\subset G$. The former uses the right coset representative, i.e., $G/K\to P$ as $gK\mapsto p$ where $g=pk$ is its group level Cartan decomposition. Then the action of $G$ on $G/K$ is given as $(g_1, g_2K)\mapsto g_1g_2K$. The latter authors use the map $G/K\to P$ such that $gK\mapsto g(\sigma g)^{-1}$ where $\sigma$ is the group level involution. The $G$-action is $(g_1, g_2)\mapsto g_1g_2(\sigma g_1)^{-1},\,\,g_1\in G, g_2\in P$. In terms of Theorem \ref{thm:symspacejacobians}, the latter gives the map $\Phi$ such that $(k, a)\mapsto ka^2k^{-1}$ since 
\begin{equation*}
    g(\sigma g)^{-1} = pk\sigma(pk)^{-1} = pk(p^{-1}k)^{-1} = pkk^{-1}p = p^2 = kak^{-1}kak^{-1} = ka^2k^{-1}
\end{equation*}
which explains the extra factor $\frac{1}{2}$ applied to $H$ where $a = \exp(H)$. Moreover, these two identifications define the map $\Phi:K\times A\to P$ with the same $k, a$ as 
\begin{equation}\label{eq:mapPhi}
    \Phi:(k, a)\mapsto kaK\hspace{0.5cm}\text{or}\hspace{0.5cm}\Phi:(k, a)\mapsto ka^2k^{-1},
\end{equation}
\noindent 
depending on the author's notational choice explained above. This coordinate system $\Phi$ is sometimes called the polar coordinate decomposition, e.g., see Helgason \cite[p.402]{Helgason1978}.
\end{remark}

\begin{example}[$G/K$ vs. $P$: A symmetric positive definite matrix]
Let us take a look at the two realizations in Remark \ref{rem:representations}, for $G/K = \gl{n, \mathbb{R}}/\ortho{n}$, where $P$ is the set of all symmetric positive definite matrices. Let $S$ be a fixed positive definite symmetric matrix, with its eigendecomposition $S = Q\Lambda Q^T$, with $Q\in\ortho{n}$. The coset representation of $S$ is $Q\Lambda\cdot\ortho{n}\in G/K$ as $Q\Lambda = (Q\Lambda Q^T)Q$ is the polar decomposition. With the realization of $P\cong G/K$, the point in $G/K$ is represented by the matrix $S = Q\Lambda Q^T$.
\end{example}

Finally we have the Lie algebra counterpart of Theorem \ref{thm:symspacejacobians} when $K = K_\sigma=K_\tau$. 
\begin{theorem}\label{thm:liealgebrajac}
For a noncompact Riemannian symmetric space $G/K$ with the Cartan decomposition $\mathfrak{g} = \mathfrak{k}+\mathfrak{p}$ let $\mathfrak{a}$ be a maximal abelian subalgebra of $\mathfrak{p}$. We have 
\begin{equation}\label{eq:liealgebramap}
    \Psi:K\times\mathfrak{a}\to \mathfrak{p};\hspace{0.2cm} (k, H)\mapsto kHk^{-1},
\end{equation}
equivalently the decomposition $\mathfrak{p} = \cup_{k\in K} k \mathfrak{a} k^{-1}$ with the Jacobian $d\mu$ given as 
\begin{equation}\label{eq:noncompactliealgebrajac}
    d\mu(H) \propto \prod_{\alpha\in\Sigma^+} |\alpha(H)|^{m_\alpha},
\end{equation}
where $H\in\mathfrak{a}$ and $\Sigma$ is the restricted root system with dimensions $m_\alpha$. The measure on $\mathfrak{p}$ is the Euclidean measure. 
\end{theorem}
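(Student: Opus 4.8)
The plan is to split the statement into two halves: \emph{(i)} the map $\Psi$ is surjective, equivalently $\mathfrak{p}=\bigcup_{k\in K}\mathrm{Ad}(k)\,\mathfrak{a}$, and \emph{(ii)} the change of variables it induces carries Lebesgue measure on $\mathfrak{p}$ to $\prod_{\alpha\in\Sigma^+}|\alpha(H)|^{m_\alpha}\,dH$ times Haar measure on $K$ (equivalently on $K/M$ with $M:=Z_K(\mathfrak{a})$). For \emph{(i)} I would run the standard critical-point argument: fix a regular $H_0\in\mathfrak{a}$, equip $\mathfrak{g}$ with the inner product $\langle X,Y\rangle:=-B(X,\sigma Y)$ ($B$ the Killing form), and, given $X\in\mathfrak{p}$, maximize the continuous function $k\mapsto\langle\mathrm{Ad}(k)X,H_0\rangle$ over the compact group $K$. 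At a maximizer $k$, differentiating along $T\in\mathfrak{k}$ gives $\langle[T,\mathrm{Ad}(k)X],H_0\rangle=0$, hence $\langle\mathrm{Ad}(k)X,[\mathfrak{k},H_0]\rangle=0$; since $[\mathfrak{k},H_0]=\bigoplus_{\alpha\in\Sigma^+}\mathfrak{p}_\alpha$ is exactly the orthogonal complement of $\mathfrak{a}$ in $\mathfrak{p}$, this forces $\mathrm{Ad}(k)X\in\mathfrak{a}$. (Alternatively one invokes the spectral theorem for the symmetric operators $\ad(X)$, $X\in\mathfrak{p}$, as developed in Section~\ref{sec:LinAlg}.)

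For \emph{(ii)} I would first pass to the regular locus. With $M=Z_K(\mathfrak{a})$, $\mathfrak{m}=\mathrm{Lie}(M)=\mathfrak{k}_0$, and a fixed Weyl chamber $\mathfrak{a}^+$, the map $\Psi$ descends to $\bar\Psi:(K/M)\times\mathfrak{a}^+\to\mathfrak{p}^{\mathrm{reg}}$, which I claim is a bijection onto the open dense set of regular elements (complement of measure zero); injectivity uses that $Z_{\mathfrak{p}}(H)=\mathfrak{a}$ for regular $H$ (so that an element of $K$ carrying one regular point of $\mathfrak{a}^+$ to another must normalize $\mathfrak{a}$) together with the simple transitivity of the restricted Weyl group $N_K(\mathfrak{a})/M$ on chambers. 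It then suffices to compute $\det d\bar\Psi$. Differentiating $\Psi(k\exp tT,\,H+tY)=\mathrm{Ad}(k\exp tT)(H+tY)$ at $t=0$ gives
\begin{equation*}
d\Psi_{(k,H)}(T,Y)=\mathrm{Ad}(k)\big(Y-\ad(H)T\big),\qquad T\in\mathfrak{k},\ Y\in\mathfrak{a},
\end{equation*}
and since $\mathrm{Ad}(k)$ preserves $\langle\cdot,\cdot\rangle$ it is an isometry of $\mathfrak{p}$ and contributes only a sign. So the Jacobian is the absolute value of the determinant of the linear map $L_H:(T,Y)\mapsto Y-\ad(H)T$ restricted to the orthogonal complement of the fiber direction, i.e.\ from $\big(\bigoplus_{\alpha\in\Sigma^+}\mathfrak{k}_\alpha\big)\oplus\mathfrak{a}$ onto $\mathfrak{p}$.

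The last step reads off $\det L_H$ from the restricted root space decompositions $\mathfrak{k}=\mathfrak{k}_0\oplus\bigoplus_{\alpha\in\Sigma^+}\mathfrak{k}_\alpha$ and $\mathfrak{p}=\mathfrak{a}\oplus\bigoplus_{\alpha\in\Sigma^+}\mathfrak{p}_\alpha$, where $\dim\mathfrak{k}_\alpha=\dim\mathfrak{p}_\alpha=m_\alpha$ and $\sigma$ interchanges $\mathfrak{g}_\alpha\leftrightarrow\mathfrak{g}_{-\alpha}$. The operator $L_H$ is block diagonal for these decompositions: the identity on $\mathfrak{a}$, zero on $\mathfrak{k}_0=\mathfrak{m}$ (which is precisely why we quotiented by $M$), and on each $\mathfrak{k}_\alpha$ the map $-\ad(H)$ is an isomorphism onto $\mathfrak{p}_\alpha$. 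A short computation fixes the scaling: for $X\in\mathfrak{g}_\alpha$ one has $T:=X+\sigma X\in\mathfrak{k}_\alpha$ and $E:=X-\sigma X\in\mathfrak{p}_\alpha$ with $\ad(H)T=\alpha(H)E$, and, because $B$ vanishes on $\mathfrak{g}_\alpha\times\mathfrak{g}_\alpha$ and on $\mathfrak{g}_{-\alpha}\times\mathfrak{g}_{-\alpha}$, the Gram matrices of $\{X_i+\sigma X_i\}$ in $\mathfrak{k}_\alpha$ and $\{X_i-\sigma X_i\}$ in $\mathfrak{p}_\alpha$ coincide; hence in paired orthonormal bases $-\ad(H)\colon\mathfrak{k}_\alpha\to\mathfrak{p}_\alpha$ is exactly $-\alpha(H)\,I_{m_\alpha}$. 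Multiplying blocks, $|\det L_H|=\prod_{\alpha\in\Sigma^+}|\alpha(H)|^{m_\alpha}$, which is \eqref{eq:noncompactliealgebrajac}; the integral formula follows by the coarea formula together with \emph{(i)}.

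I expect the main friction to be in step \emph{(ii)}'s reduction to $(K/M)\times\mathfrak{a}^+$ — the clean bijectivity onto $\mathfrak{p}^{\mathrm{reg}}$, which bundles together several standard but nontrivial facts (regular centralizers in $\mathfrak{p}$ equal $\mathfrak{a}$, conjugacy of maximal abelian subalgebras of $\mathfrak{p}$, simple transitivity of the restricted Weyl group on chambers). The other slightly delicate point is the norm identity $\|X+\sigma X\|=\|X-\sigma X\|$ that turns $-\ad(H)$ into a scalar block; this is exactly where the choice of inner product $-B(\cdot,\sigma\cdot)$ and the $B$-orthogonality of distinct restricted root spaces enter. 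Everything else — the differential of $\Psi$, the isometry property of $\mathrm{Ad}(k)$, the block bookkeeping — is routine. As a sanity check one can also recover \eqref{eq:noncompactliealgebrajac} from Theorem~\ref{thm:symspacejacobians} by the substitution $H\mapsto tH$ and $t\to0$, using $\sinh(t\alpha(H))\sim t\,\alpha(H)$ and $\dim\mathfrak{p}-\dim\mathfrak{a}=\sum_{\alpha\in\Sigma^+}m_\alpha$.
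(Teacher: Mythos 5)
Your proof is correct. A word on how it relates to the paper: Theorem~\ref{thm:liealgebrajac} is stated in the paper as background, without a standalone proof — the mechanism is instead built into the ``ping pong'' framework of Section~\ref{sec:LinAlg}, where the operator $\ad_H$ bounces $k_j=x_j+\theta x_j\in\mathfrak{k}$ to $\alpha_j\,p_j$ with $p_j=x_j-\theta x_j\in\mathfrak{p}$ (equation~\eqref{eq:adHkp} and the ``left map'' in Figure~\ref{fig:kpvecs}), and the Jacobian $\prod_j|\alpha_j(H)|$ is read off from the diagonal of $\ad_H|_{\mathfrak{k}}$ in that basis. Your computation — differentiating $\Psi$ to get $d\Psi_{(k,H)}(T,Y)=\mathrm{Ad}(k)(Y-\ad(H)T)$, block-diagonalizing $-\ad(H)\colon\mathfrak{k}_\alpha\to\mathfrak{p}_\alpha$ via $T=X+\sigma X\mapsto\alpha(H)(X-\sigma X)$, and matching the Gram matrices to turn each block into $-\alpha(H)I_{m_\alpha}$ — is literally the same calculation written in classical Helgason-style notation rather than the paper's matrix/ping-pong notation (your $T$, $E$ are the paper's $k_j$, $p_j$, and your Gram-matrix observation is the paper's tacit assertion that $\cup_j\{k_j\}$, $\cup_j\{p_j\}$ are matched orthonormal bases). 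What you add beyond the paper: an explicit surjectivity argument via maximizing $\langle\mathrm{Ad}(k)X,H_0\rangle$ over $K$, and the reduction to $(K/M)\times\mathfrak{a}^+$ with the Weyl-group bookkeeping, neither of which the paper addresses. The sanity check via $\sinh(t\alpha(H))\sim t\,\alpha(H)$ in Theorem~\ref{thm:symspacejacobians} is a nice touch consistent with the paper's dual treatment of the group-level and algebra-level Jacobians.
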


\subsection{A symmetric space: one RMT or many RMTs?}\label{sec:onevsmany}

\par The answer to the title question of this section is that both one and many can be construed as correct. To explain how this is possible requires teasing apart the assumptions behind the words ``associated with." Certainly, \cite{altland1997nonstandard,caselle1996new,duenez2004random,ivanov2002random} associate one random matrix with one symmetric space. However the example of the GSVD coordinate systems discussed in Section \ref{sec:introGSVD} associates multiple Jacobi densities with one symmetric space, the Grassmannian manifold. In \cite{caselle2004random} another example is illustrated as the ``non-Cartan parameterization," for the special case of $(G, K_\sigma, K_\tau) = (\un{n}, \ortho{n}, \un{p}\times \un{q})$. (A similar approach may be found in \cite{an2006generalization}.) This is discussed in Section \ref{sec:newjacobi1}.


\par The reconciliation is that indeed it is true that the required maps \eqref{eq:symspacekaK} when $K = K_\sigma = K_\tau$ i.e., $\Phi(k, a) = kaK = kak^{-1}$ (compact) or the map \eqref{eq:liealgebramap} $\psi(k, H) = kHk^{-1}$ (noncompact) leads to a unique random matrix theory associated to a given symmetric space $G/K$. This is unique in a sense that any geodesic on the symmetric space $G/K$ could be transformed to the geodesic on $A$ with the above maps. 

\par However, if we relax the condition so that we are allowed to choose $K_\tau$ under the generalized Cartan decomposition framework, we can associate multiple random matrix theories to one symmetric space. The GSVD coordinate systems in Section \ref{sec:introGSVD} illustrate this viewpoint. The real Grassmannian manifold $G/K = \ortho{n}/(\ortho{n-s}\times \ortho{s})$ has the map $\Phi:(k, a)\mapsto kaK$ for $K=K_\sigma=K_\tau$ explicitly written as $X = \text{\scalebox{0.75}[0.75]{$\begin{bmatrix}U & \\ & V\end{bmatrix}\begin{bmatrix}C \\ S\end{bmatrix}$}}\cdot\ortho{s}$ where $U, V$ are $(r-s)\times s, s\times s$ orthogonal matrices. On the other hand, if we let $K_\tau = \ortho{p}\times \ortho{q}$, we have multiple maps $\Phi:(k_\tau, a)\mapsto k_\tau a K$ written as $X = \text{\scalebox{0.75}[0.75]{$\begin{bmatrix}U & \\ & V\end{bmatrix}\begin{bmatrix}C \\ S\end{bmatrix}$}}\cdot\ortho{s}$ where $U, V$ are $p\times s, q\times s$ orthogonal matrices. 

\par Starting from Section \ref{sec:circular}, we discuss \rn{1} random matrices arising from the $\kak$ decompositions of compact symmetric spaces (Theorem \ref{thm:symspacejacobians} or \ref{thm:compactk1ak2decomp}) and \rn{2} random matrices arising from the Lie algebra decomposition of noncompact symmetric spaces (Theorem \ref{thm:liealgebrajac}). The associated decompositions are well explained by matrix factorizations in numerical linear algebra. As we pointed out, the resulting Jacobi ensembles cover the full parameter set of the classical Jacobi densities, thereby completing the classification from the classical RMT point of view.

\section{Cartan's idea: a modernized approach }\label{sec:LinAlg}

The Jacobian of the KAK ($\kak$) decomposition, equivalently the determinant of the differential of the map $\Phi:K\times A \to P$ (in Theorem \ref{thm:symspacejacobians} and Remark \ref{rem:representations}), is computed in several references \cite{Helgason1984,kirillov1995representation,knapp2001representation}. The proof of \eqref{eq:noncompactk1ak2jac} is can also be found in \cite{flensted1980discrete,hoogenboom1983generalized}. However the proof can be inaccessible to some audiences. Meanwhile, individual cases of the KAK decomposition, recognized as matrix factorizations, show up in many areas of mathematics, and some were discovered in various formats by specialists in numerical linear algebra. Motivated by Random Matrix Theory (and sometimes perturbation theory in numerical analysis), Jacobians of these factorizations were often computed case-by-case using the matrix differentials and wedging of independent elements \cite{dumitriu2002matrix,edelman2018random,forrester2010log,mehta2004random}.

\par In this section, we provide a generalization of such individual Jacobian computations and compare it to the general technique Helgason proposed. With appropriate translation of terminologies and maps in Lie theory into linear algebra, we observe both methods are indeed the same process but have been illustrated in different languages for a long time. We start out by introducing some important concepts in Lie theory accessible to an audience with a good background in linear algebra and perhaps some basic geometry. Then, in Table \ref{tab:comparison1}, we present a line-by-line correspondence between Helgason's derivation and the proof by matrix differentials.

\subsection{The ping pong operator, ping pong vectors and ping pong subspaces}\label{sec:pingpong}
We will start with a concrete $2\times 2$ linear operator so as to establish the notions of the \textit{ping pong operator}, \textit{ping pong vectors}, \textit{ping pong
subspaces} and the relationship to eigenvectors. Then we will define a ``bigger" linear operator $\ad_H$ that acts on $2\times 2$ spaces exactly in the manner we are about to describe.
\par We introduce the $2\times 2$ matrix 
\begin{equation*}
    M:=\twotwo{0}{\alpha}{\alpha}{0} = \alpha\twotwo{0}{1}{1}{0},
\end{equation*}
which we will call a \textit{2 by 2 ping pong operator} and we will call $\twoone{1}{0}$ and $\twoone{0}{1}$ the \textit{ping pong vectors} of $M$, in that $M$ bounces these two vectors into $\alpha$ times the other,
\begin{equation*}
M\twoone{1}{0} = \alpha\twoone{0}{1}, \hspace{0.5cm}M\twoone{0}{1} = \alpha\twoone{1}{0}.
\end{equation*}
Furthermore $M$ has eigenvectors $\twoone{1}{1}, \twoone{\ 1}{-1}$, with eigenvalues $\alpha, -\alpha$. We will call the eigenvalue a \textit{root} of $M$.

\par Also worth pointing out are the matrix exponential and matrix $\sinh$ of $M$,
\begin{equation*}
e^M = \twotwo{\cosh\alpha}{\sinh\alpha}{\sinh\alpha}{\cosh\alpha}\hspace{0.5cm}\text{ and }\hspace{0.5cm} \sinh M =
\frac{1}{2}(e^M + e^{-M})=
\sinh\alpha \cdot \twotwo{0}{1}{1}{0},  
\end{equation*}
and thus we see that $\sinh M$ is another ping pong operator with scaling $\sinh\alpha$. Figure \ref{fig:kpvecs} plots the action of a ping pong matrix and its exponential, with notations that we will use in the next sections, i.e, the ping pong operator is denoted $\ad_H$, $p_j$ and $k_j$ are the ping pong vectors, and $x_j$ and $\theta x_j$ are the eigenvectors. The right side of Figure \ref{fig:kpvecs} shows the action of $e^M$ and portrays $\sinh(M)$ as a projection of $e^M$ on the $p_j$ direction.

\par We now go beyond $2\times 2$ matrices, and suggest the more general $2N \times 2N$ ping pong matrix $M_N$, with $N$ roots, $\alpha_1,\ldots,\alpha_N$, $N$ pairs of ping pong vectors $(k_1, p_1), \dots, (k_N, p_N)$ along with eigenvectors $(x_1, y_1), \dots, (x_N, y_N)$,

\begin{equation*}
    M_N = \begin{bsmallmatrix}
    0 & \alpha_1 & & & \\
    \alpha_1 & 0 & & & \\
    & &\ddots & &\\
    & & & 0 & \alpha_N\\
    & & & \alpha_N & 0
    \end{bsmallmatrix}
\end{equation*}

\begin{equation*}
    k_j, p_j, x_j, y_j = \begin{bmatrix}
    \vdots\\ 1 \\ 0\\ \vdots
    \end{bmatrix}, \begin{bmatrix}
    \vdots\\ 0 \\ 1 \\ \vdots
    \end{bmatrix}, \begin{bmatrix}
    \vdots\\ 1 \\ 1\\ \vdots
    \end{bmatrix}, \begin{bmatrix}
    \vdots\\ {\ }1 \\ -1 \\ \vdots
    \end{bmatrix}, \hspace{0.5cm} j = 1, 2, \dots, N,
\end{equation*}
where the $2j-1$ and $2j$ positions are $0$ or $\pm 1$ and all other entries of these vectors are $0$. The matrices $\exp(M_N)$ and $\sinh M_N$ are block versions of the $2 \times 2$ case. 

\par We may define the subspaces, $\mathfrak{k}$ and $\mathfrak{p}$ (using the ``mathfrak" Fraktur letters ``k" and ``p") to be the span of the $k_j$ and $p_j$ respectively. Notice that $\mathfrak{k}$ and $\mathfrak{p}$ are orthogonal complements as subspaces. A key ``ping pong" relationship between these subspaces is that
\begin{gather*}
M_N k \in\mathfrak{p}  \text{\  if $k\in\mathfrak{k}$},\\
M_N p \in\mathfrak{k} \text{\  if $p\in\mathfrak{p}$}.
\end{gather*}

Thus, if we consider ${M_N}{|_\mathfrak{k}}$, the restriction of $M_N$ to $\mathfrak{k}$ we have an operator from $\mathfrak{k}$ to $\mathfrak{p}$. Evidently, ${M_N}{|_\mathfrak{k}}$ as a matrix may be obtained by taking the even rows and odd columns of $M_N$. The result is a diagonal matrix with the $\alpha_j$ on the diagonal. Similarly $\sinh(M_N){|_\mathfrak{k}}$ is a diagonal matrix with $\sinh(\alpha_j)$ on the diagonal. We then get the important result that
\begin{equation*}
\det(\sinh(M_N){|_\mathfrak{k}}) = \prod_{j=1}^N \sinh \alpha_j,
\end{equation*}
the product of the hyperbolic sines of the roots.

\par Given a linear operator $\mathcal{L}$ on a vector space with nonzero eigenvalues $\pm\lambda$, the following lemma constructs a pair of ping pong vectors  from $\mathcal{L}$. 
\begin{lemma}\label{lem:linop}
For a linear operator $\mathcal{L}$ defined on any vector space, assume $\pm\lambda$ are both nonzero eigenvalues of $\mathcal{L}$. Let $x$ and $y$ be the corresponding eigenvectors, i.e., $\mathcal{L}x = \lambda x$ and $\mathcal{L}y = -\lambda y$. Define two vectors $k:=x+y, p:=x-y$. Then, $k, p$ are ping pong vectors. Furthermore we have for the operator $\exp(\mathcal{L})$,
\begin{equation*}
    e^\mathcal{L} k = \cosh\lambda k + \sinh\lambda p, \hspace{0.5cm}e^\mathcal{L} p = \sinh\lambda k + \cosh\lambda p.
\end{equation*}
\end{lemma}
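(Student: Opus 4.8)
The plan is to verify the two claims in Lemma \ref{lem:linop} by direct computation, exploiting the linearity of $\mathcal{L}$ and of $\exp(\mathcal{L})$ on the two-dimensional span of $x$ and $y$. First I would check the ping pong relation: applying $\mathcal{L}$ to $k = x+y$ gives $\mathcal{L}k = \lambda x - \lambda y = \lambda(x-y) = \lambda p$, and applying $\mathcal{L}$ to $p = x-y$ gives $\mathcal{L}p = \lambda x + \lambda y = \lambda(x+y) = \lambda k$. So $\mathcal{L}$ bounces $k$ and $p$ into $\lambda$ times the other, which is exactly the defining property of ping pong vectors (compare the $2\times 2$ ping pong operator $M$ with $\alpha = \lambda$; note $x$ and $y$ play the role of the eigenvectors $\twoone{1}{1}$, $\twoone{\ 1}{-1}$, and $k, p$ the role of $\twoone{1}{0}$, $\twoone{0}{1}$).

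For the exponential statement, the cleanest route is to observe that $\exp(\mathcal{L})x = e^\lambda x$ and $\exp(\mathcal{L})y = e^{-\lambda}y$, since $x, y$ are eigenvectors of $\mathcal{L}$ with eigenvalues $\pm\lambda$ and hence eigenvectors of $\exp(\mathcal{L})$ with eigenvalues $e^{\pm\lambda}$ (this follows from applying the power series of $\exp$ termwise to the eigenvector relation). Then I would simply expand:
\begin{equation*}
e^{\mathcal{L}} k = e^{\mathcal{L}}(x+y) = e^\lambda x + e^{-\lambda} y = \frac{e^\lambda + e^{-\lambda}}{2}(x+y) + \frac{e^\lambda - e^{-\lambda}}{2}(x-y) = \cosh\lambda\, k + \sinh\lambda\, p,
\end{equation*}
and identically
\begin{equation*}
e^{\mathcal{L}} p = e^{\mathcal{L}}(x-y) = e^\lambda x - e^{-\lambda} y = \frac{e^\lambda - e^{-\lambda}}{2}(x+y) + \frac{e^\lambda + e^{-\lambda}}{2}(x-y) = \sinh\lambda\, k + \cosh\lambda\, p,
\end{equation*}
using $x = \tfrac12(k+p)$ and $y = \tfrac12(k-p)$ to re-express everything in terms of $k$ and $p$.

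The only point requiring a modicum of care — and the closest thing to an obstacle — is justifying $\exp(\mathcal{L})x = e^\lambda x$ when $\mathcal{L}$ acts on a vector space that may be infinite-dimensional or when $\exp(\mathcal{L})$ is only formally defined; but in every application in this paper $\mathcal{L} = \ad_H$ acts on the finite-dimensional Lie algebra $\mathfrak{g}$, so $\exp(\mathcal{L})$ is the genuine matrix exponential and the termwise argument $\exp(\mathcal{L})x = \sum_{n\ge 0}\frac{1}{n!}\mathcal{L}^n x = \sum_{n\ge 0}\frac{\lambda^n}{n!}x = e^\lambda x$ is immediate. I would state the lemma and proof in this finite-dimensional (or norm-convergent) setting so that this step is routine, and note that no assumption beyond $\pm\lambda \ne 0$ being eigenvalues is actually used — the conclusion is purely a two-dimensional linear-algebra identity inside $\mathrm{span}\{x,y\}$.
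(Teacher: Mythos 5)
Your proof is correct and takes the same route the paper intends: the paper simply remarks that ``the proof is a straightforward extension of the discussion in previous paragraphs,'' referring to the explicit $2\times 2$ computation, and your argument is exactly that extension written out — verifying $\mathcal{L}k=\lambda p$, $\mathcal{L}p=\lambda k$, and diagonalizing $e^{\mathcal{L}}$ on $\mathrm{span}\{x,y\}$ to recover the $\cosh/\sinh$ formulas. Nothing to change.
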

\noindent The proof is a straightforward extension of the discussion in previous paragraphs.

\begin{remark}\label{rem:pingpongroles}
For the reader who wants to know the upcoming significance of this fact for Jacobians of matrix factorizations, it turns out (or maybe as the reader already observed in Section \ref{sec:background}) that the Jacobian will be the product of $\sinh\alpha$'s. Just as the matrix $\sinh\big(\begin{bsmallmatrix} 0 & \alpha \\ \alpha & 0\end{bsmallmatrix}\big)$ takes one of the ping pong vectors to $\sinh\alpha$ times the other, the key piece of the differential map will consist of multiple ping pong relationships, each one sending one ping pong vector to another.
\end{remark}

\subsection{The Kronecker product, linear operator $\ad_X$ and its exponential}
Lie theory picks out operators $\mathcal{L}$ that exactly have the properties in Section \ref{sec:pingpong}. Our vector spaces are now matrix spaces, and our operators are linear operators on a matrix space. We introduce the Lie bracket, denoted by $[X, Y]$, defined as $[X, Y] = XY - YX$(the commutator). The Kronecker product notation is very helpful in this context. We define the Kronecker product notation\footnote{Many authors would write $\text{vec}(BXA^T) = (A\otimes B)\text{vec}(X)$, but we omit the ``vec" as be believe it is always clear from context. In a computer language such as Julia, one would write \texttt{kron(A,B) * vec(X) = vec(B*X*A')}} as a linear operator on a matrix space. 
\begin{equation}\label{eq:kroneckerid}
    (A\otimes B)X = BXA^T.
\end{equation}
With this, we can express the Lie bracket with Kronecker products.
\begin{equation*}
    (I\otimes X - X^T\otimes I)Y = XY - YX.
\end{equation*}
Consider the Lie bracket as a linear operator (determined by $X$) applied to $Y$, and call this operator $\ad_X$: (abbreviation for ``adjoint") 
\begin{gather*}
    \ad_X =   I\otimes X - X^T\otimes I\\
    \ad_X(Y) = [X,Y].
\end{gather*}
This will be the important ping pong operator $\mathcal{L}$. The operator exponential of $\ad_X$ (equivalently, the matrix exponential of $I\otimes X - X^T\otimes I$) is given in the following.
\begin{lemma}\label{lem:expad}
For the linear operator $\ad_X$, the following holds for $e^{\ad_X}:=\sum_{j=0}^\infty \frac{(\ad_X)^n}{n!}$ and $\sinh \ad_x = (e^{\ad_X} + e^{-\ad_X})/2$:
\begin{gather}
    e^{\ad_X} = \exp (I\otimes X - X^T\otimes I)  = (e^{-X})^T\otimes e^X, \\
    e^{\ad_X}Y = e^X Y e^{-X} \hspace{0.5cm}\mbox{\rm and }\hspace{0.5cm}(\sinh \ad_x)Y = (e^X Y e^{-X} -  e^{-X} Y e^{X})/2.\label{eq:expad}   
\end{gather}
\end{lemma}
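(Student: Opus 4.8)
The plan is to exploit the fact that $\ad_X$ has already been written as a difference of two Kronecker-product operators, $I\otimes X$ and $X^T\otimes I$, which \emph{commute}. First I would record the mixed-product rule $(A\otimes B)(C\otimes D)=(AC)\otimes(BD)$ and apply it to obtain $(I\otimes X)(X^T\otimes I)=X^T\otimes X=(X^T\otimes I)(I\otimes X)$, so the two summands of $\ad_X$ indeed commute. Consequently the operator exponential factors, $\exp(I\otimes X-X^T\otimes I)=\exp(I\otimes X)\exp(-X^T\otimes I)$. I would also remark at this point that the operator series $\sum_n(\ad_X)^n/n!$ is literally the matrix exponential of the matrix $I\otimes X-X^T\otimes I$ acting on the (vectorized) matrix space, so there is nothing to reconcile between the two notations in the statement.

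Next I would evaluate each factor. Since $(I\otimes X)^n=I\otimes X^n$, summing the exponential series term by term gives $\exp(I\otimes X)=I\otimes e^X$, and likewise $\exp(-X^T\otimes I)=e^{-X^T}\otimes I=(e^{-X})^T\otimes I$. Multiplying these and using the mixed-product rule once more yields $e^{\ad_X}=\big((e^{-X})^T\otimes I\big)\big(I\otimes e^X\big)=(e^{-X})^T\otimes e^X$, which is the first asserted identity. To get the action on $Y$, I would apply the Kronecker identity \eqref{eq:kroneckerid}, namely $(A\otimes B)Y=BYA^T$, with $A=(e^{-X})^T$ and $B=e^X$: this gives $e^{\ad_X}Y=e^X Y\big((e^{-X})^T\big)^T=e^X Y e^{-X}$. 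Finally, replacing $X$ by $-X$ gives $e^{-\ad_X}Y=e^{-X}Y e^X$, and taking the appropriate half-difference produces the stated formula $(\sinh\ad_X)Y=\tfrac12\big(e^XYe^{-X}-e^{-X}Ye^{X}\big)$.

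There is no serious obstacle in this argument; the only things to be careful about are invoking commutativity \emph{before} splitting the exponential, and correctly tracking which Kronecker factor ends up to the left versus to the right of $Y$ (i.e., the transpose appearing in \eqref{eq:kroneckerid}). An alternative route would be to prove $e^{\ad_X}Y=e^XYe^{-X}$ directly by expanding $\sum_n\ad_X^n(Y)/n!$ and matching it against the Cauchy product of the series for $e^XY$ and $Ye^{-X}$; this is a standard identity but is messier than the Kronecker factorization, which has the advantage of reusing exactly the notation and identities already established in this section.
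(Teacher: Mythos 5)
Your proof is correct and follows essentially the same route as the paper's: split $\exp(I\otimes X - X^T\otimes I)$ using commutativity of the two Kronecker factors, evaluate each exponential series, recombine via the identity \eqref{eq:kroneckerid}, and obtain the $\sinh$ formula by the obvious sign flip. Your write-up is slightly more explicit in that you verify commutativity via the mixed-product rule rather than asserting it, but the underlying argument is identical.
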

\begin{proof}
The proof is straightforward by the identity \eqref{eq:kroneckerid}. $e^XYe^{-X} = \big((e^{-X})^T\otimes e^X\big) Y$ and $e^{\ad_X}Y = \exp(I\otimes X - X^T\otimes I)Y$. It is left to prove $(e^{-X})^T\otimes e^X= \exp(I\otimes X - X^T\otimes I)$. Since $I\otimes X$ commutes with $ X^T\otimes I$, we have 
\begin{gather*} \exp(I\otimes X - X^T\otimes I) = e^{I\otimes X}e^{- X^T\otimes I} = (I \otimes e^X)((e^{-X})^T \otimes I) = (e^{-X})^T\otimes e^X,
\end{gather*}
proving the result. The $\sinh$ result follows trivially.
\end{proof}

\subsection{Antisymmetric and symmetric matrices: an important first example of symmetric space as ping pong spaces}\label{sec:examplesymasym}

In our first example, our vector space is $n \times n $ real matrices. Consider
\begin{gather*}
    \mathfrak{k} = \{\text{Antisymmetric matrices}\}\\
    \mathfrak{p} = \{\text{Symmetric matrices}\}.
\end{gather*}
The ping pong operator that will bounce $\mathfrak{k}$ and $\mathfrak{p}$ around will be $\ad_H= I \otimes H - H^T \otimes I,$ where $H$ is the diagonal matrix 
\begin{equation*}
H = \begin{bmatrix} h_1 & & \\  &\ddots& \\ & & h_n \end{bmatrix}.
\end{equation*}
Notice that the operator $\ad_H$ sends an antisymmetric matrix to a symmetric matrix, and a symmetric matrix to an antisymmetric matrix. 

\par What does this have to do with Jacobians of matrix factorizations such as the symmetric positive definite eigenvalue factorization? Consider a perturbation of $Q$ when forming $S=Q \Lambda Q^T$. An infinitesimal antisymmetric perturbation $Q^TdQ$ is mapped into a $dS$, an infinitesimal symmetric perturbation. This is the very linear map from the tangent space of $Q$ to that of $S$ that we wish to understand, so perhaps it is not surprising we would want to restrict our ping pong operator from $\mathfrak{k}$ to $\mathfrak{p}$. We invite the reader to check that the corresponding eigenmatrices and ping pong matrices of $\ad_H$ may be found in the first column of Table \ref{tab:kpexamples}.

\subsection{General $\mathfrak{k}$ and $\mathfrak{p}$ arise from an involution $\theta$}\label{sec:kpmatrices}
We proceed to construct more important general operators $\mathcal{L}$ that have the property in the assumption of Lemma \ref{lem:linop}. This is where the theory of Lie groups and symmetric spaces need to be brought in. Upon doing so, we will obtain two linear spaces of matrices $\mathfrak{k}$, $\mathfrak{p}$ and also a space $\mathfrak{a}$. 

\par For the reader not familiar with Lie groups, one need only imagine a continuous set of matrices which are a subgroup of real, complex, or quaternion matrices. The tangent space $\mathfrak{g}$ is just a vector space of matrix differentials at the identity. One key example is the compact Lie group $\ortho{n}$ (the group of square orthogonal matrices) and its tangent space at the identity $\mathfrak{g}_{\ortho{n}}$: the set of antisymmetric matrices. Another key example is all $n$-by-$n$ invertible matrices $\gl{n, \mathbb{R}}$ (a noncompact Lie group), and its tangent space $\mathfrak{g}_{\gl{n, \mathbb{R}}}$, consisting of all $n$-by-$n$ matrices. 

\par Cartan noticed that important matrix factorizations start with two ingredients: the \textbf{tangent space $\mathfrak{g}$} (at the identity) of a Lie group $G$ and an \textbf{involution $\theta$} on $\mathfrak{g}$. (i.e., $\theta^2 = \text{Id}$ and $\theta[X, Y] = [\theta X, \theta Y]$) An example of $\theta$ is $\theta(X)=-X^T$ on
$\mathfrak{g}$, for $G = \gl{n, \mathbb{R}}$. Among matrices in $\mathfrak{g}$, we select two kinds of matrices. The ones fixed by the involution $\theta$ and the ones negated by $\theta$. Denote each set by $\mathfrak{k}$ and $\mathfrak{p}$.
\begin{equation*}
    \mathfrak{k}:= \{g\in\mathfrak{g}:\theta(g) = g\}, \hspace{1cm}\mathfrak{p}:= \{g\in\mathfrak{g}:\theta(g) = -g\}.
\end{equation*}
(For $\gl{n, \mathbb{R}}$, these are the antisymmetric and symmetric matrices respectively.) 

\par The next important player is $\mathfrak{a}\subset\mathfrak{p}$. Readers familiar with the singular value decomposition know the special role of diagonal matrices in the SVD as they list the very important ``singular values". Diagonal matrices have the nice property that linear combinations are still diagonal, they commute (the Lie bracket of any two are zero), and they are symmetric (the $\mathfrak{p}$ of our first example). The generalization of this is to take a $\mathfrak{p}$, and find a maximal subalgebra where every matrix commutes. This is the maximal subspace $\mathfrak{a}\subset\mathfrak{p}$ such that for all $a_1, a_2\in \mathfrak{a}$, $[a_1, a_2] = 0$.

\par If $H \in \mathfrak{a}$, then $S=Q\Lambda Q^T$ is a symmetric positive definite eigendecomposition, with $\Lambda = e^H$. In the rest of the section we will be focusing on factorizations of the form $Q\Lambda Q^{-1}$ where $\Lambda$ is a matrix exponential of $H \in \mathfrak{a}$. (These will be more general than eigendecompositions, as $Q$ may not be orthogonal, and $\Lambda$ may not be diagonal.) In particular, we will compute the Jacobian of perturbations with respect to $Q$, holding $H$ constant, and thus necessarily the Jacobian will be defined in terms of $H$.

\par From here we assume that the Lie group $G$ is noncompact. The compact case will be discussed after completing the noncompact case. Pick $H\in\mathfrak{a}$, and recall that $\ad_H$ is a linear operator on $\mathfrak{g}$. The operator $\ad_H$ will play the role of $\mathcal{L}$, the ping pong operator. We decompose $\mathfrak{g}$ into the eigenspaces of $\ad_H$. For any eigenpair $(\alpha_j, x_j)$ of $\ad_H$, i.e., $\ad_H( x_j) = [H,  x_j]  = \alpha_j  x_j$, we observe (for $\alpha_j\neq 0$)
\[\ad_H(\theta x_j) = [H, \theta x_j] = -[-H, \theta x_j] = -[\theta H, \theta x_j] = -\theta([H, x_j]) = -\alpha_j \theta x_j,\] 
which implies the eigenvalues $\pm \alpha_j$ always exist in pairs, with corresponding eigenmatrices $x_j$ and $ \theta x_j$. This satisfies the assumption of  Lemma \ref{lem:linop}, from which we can now construct our ping pong matrices, 
\begin{equation}\label{eq:pingpongmatrix}
    k_j:=x_j + \theta x_j\hspace{1cm} p_j := x_j - \theta x_j,
\end{equation}
with the ping pong relationship by the operator $\ad_H$, 
\begin{equation}\label{eq:adHkp}
    \ad_H k_j = \alpha_j p_j \hspace{1cm} \ad_H p_j = \alpha_j k_j.
\end{equation}
Also the relationship by the operator $e^{\ad_H}$ follows,
\begin{gather}\label{eq:expadHkp1}
    e^{\ad_H} k_j = \cosh\alpha_j k_j + \sinh\alpha_j p_j,\\
    e^{\ad_H} p_j = \sinh\alpha_j k_j + \cosh\alpha_j p_j.\label{eq:expadHkp2}
\end{gather}
The ping pong matrices $k_j$, $p_j$, eigenmatrices $x_j$, $\theta x_j$ and the relationships \eqref{eq:adHkp}, \eqref{eq:expadHkp1} are illustrated in Figure \ref{fig:kpvecs}.  
\begin{figure}[h]
\fbox{\includegraphics[width=4.8in]{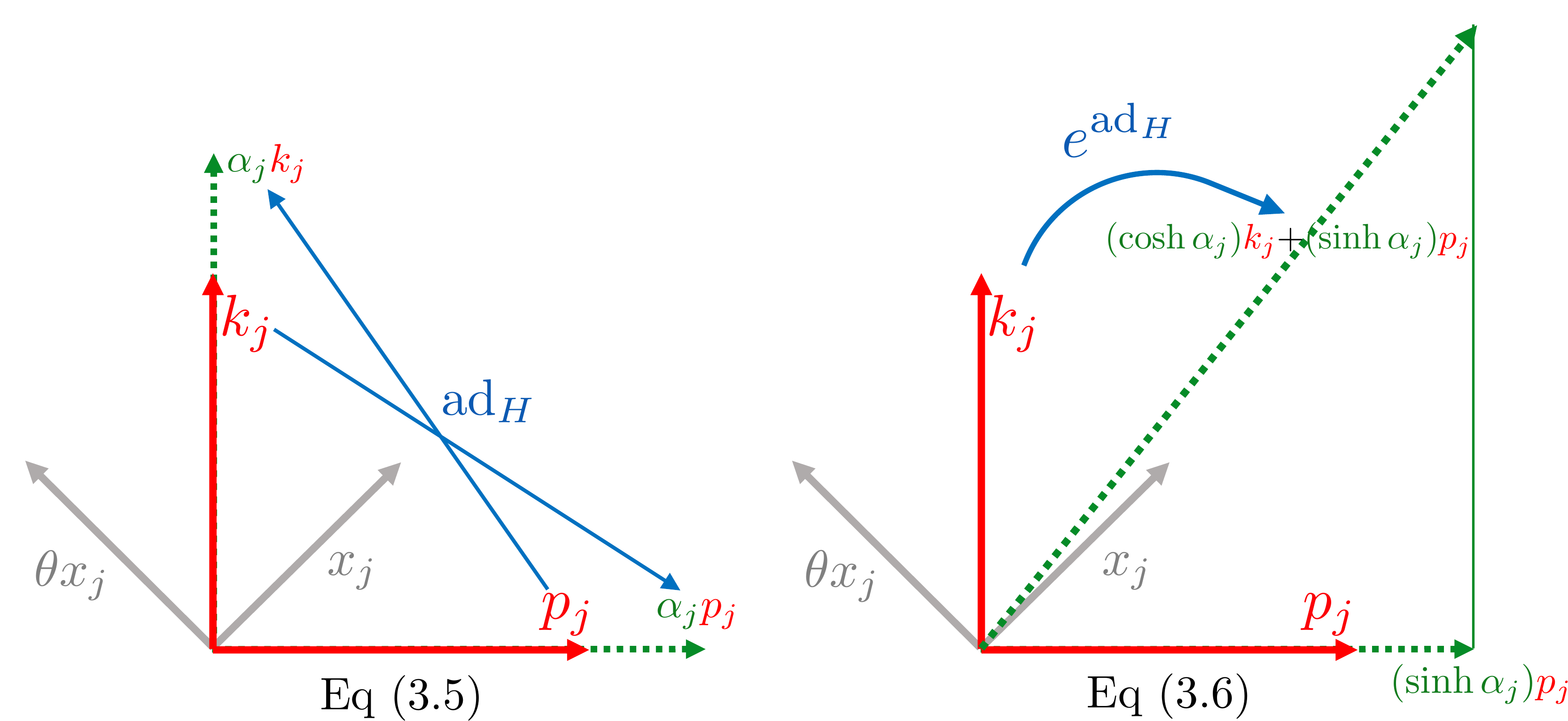}}
\caption{The eigenmatrices $x_j$, $\theta x_j$ and ping pong matrices $k_j, p_j$ \eqref{eq:pingpongmatrix} in the tangent space $\mathfrak{g}$. The operators are illustrated in blue lines. The operator $\ad_H$ and ping pong relationship (left), the operator $e^{\ad_H}$ on $k_j$ to $p_j$ (right). The left map shows the factor of $\alpha_j$, which is a building block of the Jacobian $\prod_j|\alpha_j(H)|$, \eqref{eq:noncompactliealgebrajac}. The factor of $\sinh\alpha_j$ in the right map builds the Jacobian $\prod_j|\sinh\alpha_j(H)|$, \eqref{eq:noncompactk1ak2jac}.}
\label{fig:kpvecs}
\end{figure}

\par As we mentioned in Remark \ref{rem:pingpongroles} and Section \ref{sec:examplesymasym}, the role of ping pong matrices $k_j, p_j$ are crucial. \textbf{The map $\boldsymbol{e^{\ad_H}}$ (particularly, $\boldsymbol{\sinh\ad_H}$) is the main ingredient constructing the differential map $\boldsymbol{d\Phi}$} of the factorization $\Phi:(Q, \Lambda)\mapsto Q\Lambda Q^{-1}$. The operator $e^{\ad_H}$ is applied to $k_j$ and then projected to the span of $p_j$ as in Figure \ref{fig:kpvecs} (right), leaving only the $\sinh\alpha_j$ factor. 

\par We now compute the full basis of $\mathfrak{k}$ and $\mathfrak{p}$. The collection $\cup_{j}\{x_j, \theta x_j\}$ is a full basis for the union of eigenspaces with nonzero eigenvalues. Since $\text{span}(\{x_j, \theta x_j\}) = \text{span}(\{k_j, p_j\})$ for any $j$, $\cup_j\{k_j, p_j\}$ is another full basis for the eigenspaces with nonzero eigenvalues. Interestingly, we observe $\theta k_j = k_j$ and $\theta p_j = -p_j$, which identifies $\cup_j\{k_j\}$ and $\cup_j\{p_j\}$ as subsets of the basis of $\mathfrak{k}$ and $\mathfrak{p}$ respectively. The remaining case is the zero eigenspace. When $\alpha_j=0$, there are two possibilities. Firstly, if $x_j$ and $\theta x_j$ are independent of each other, we can still obtain $k_j$ and $p_j$ as before and add them to $\cup_j\{k_j\}$ and $\cup_j\{p_j\}$. Secondly, if $x_j$ and $\theta x_j$ are colinear, $\theta x_j$ is either $x_j$ or $-x_j$. If $\theta x_j = x_j$ we collect such $x_j$ and name the set $K_z$. Similarly, if $\theta x_j = -x_j$ then we put them in $P_z$. Since we analyzed both nonzero and zero eigenspaces, we have obtained a full basis of $\mathfrak{g}$, which is $\big(\cup_j\{k_j, p_j\}\big)\cup K_z \cup P_z$. Refining once more, $\text{span}\big((\cup_j\{k_j\})\cup K_z\big) = \mathfrak{k}$ and $\text{span}\big((\cup_j\{p_j\})\cup P_z\big) = \mathfrak{p}$. 

\subsection{The operators ${\ad_H, e^{\ad_H}}$, and the subspaces ${\mathfrak{k}, \mathfrak{p}}$}
In Section \ref{sec:kpmatrices}, we obtained the basis of $\mathfrak{k}$ and $\mathfrak{p}$, in terms of ping pong matrices, by linearly combining eigenmatrices of the operator $\ad_H$. We now illustrate the relationship of the basis of $\mathfrak{k}$ and $\mathfrak{p}$ under $e^{\ad_H}$, just like we illustrated the operator $M_N$ in Section \ref{sec:pingpong}. In the $k_1, \dots, k_N$ and $p_1, \dots, p_N$ basis we have the following:
\begin{equation}\label{eq:expadHpingpong}
    e^{\ad_H}\begin{bmatrix}
    k_1\\p_1\\ \vdots \\ k_N\\p_N
    \end{bmatrix}
    = \begin{bmatrix}
    \cosh\alpha_1 & \sinh\alpha_1 & & &\\
    \sinh\alpha_1 & \cosh\alpha_1 & & &\\
    & & \ddots & &\\
    & & & \cosh\alpha_N & \sinh\alpha_N\\
    & & & \sinh\alpha_N & \cosh\alpha_N
    \end{bmatrix}\begin{bmatrix}
    k_1\\p_1\\ \vdots \\ k_N\\p_N
    \end{bmatrix}.
\end{equation}
We are now ready to carefully investigate the map $d\Phi$, using \eqref{eq:expadHpingpong}.

\begin{remark} Results in Lie theory imply that the eigenmatrices $x_j$ and $\theta x_j$ of $\ad_H$ are independent of the choice of $H\in\mathfrak{a}$. In other words, the complete basis of $\mathfrak{g}$ and $\mathfrak{k}$, $\mathfrak{p}$ obtained above does not care about a specific choice of $H$. Furthermore, the eigenvalues $\pm\alpha_j$ are functions of $H$ and these eigenvalue assigning functions $\tilde\alpha_j:H\mapsto\alpha_j\in\mathbb{R}$ are more properly called the \textit{restricted roots}. It can be inferred from the separation of the basis that $\mathfrak{k}$, $\mathfrak{p}$ together form the whole tangent space $\mathfrak{g}$. 
\begin{equation}\label{eq:cartan}
    \mathfrak{g} = \mathfrak{k}+ \mathfrak{p}.
\end{equation}
\end{remark}

\subsection{Symmetric spaces}
The reader may have noticed that our discussions have focused on the Lie algebras rather than the Lie groups themselves. It is a point of fact, that Lie groups are mostly useful to define the factorizations of our interest, but Lie algebras are where the Jacobian ``lives" and hence this is the most important place to concentrate. For the interested reader, the subgroup $K$ of $G$ is picked such that its tangent space is exactly $\mathfrak{k}$ (one easy way to imagine such a subgroup is to define $K:=\exp(\mathfrak{k})$), and we now obtain a \textit{symmetric space} $G/K$.

\par It can be proven that for the noncompact Lie group, there exists a unique involution $\theta$ such that the subgroup $K$ is the maximal compact subgroup of $G$. We call $\theta$ the \textit{Cartan involution} and \eqref{eq:cartan} is called the \textit{Cartan decomposition}. Furthermore the subset $P:=\exp(\mathfrak{p})$ plays an important role as its elements serve as representatives of the cosets in $G/K$. Regarding the identification of $G/K$ as elements in $P$, refer to the remark \ref{rem:representations}, where we point out as an example, taking $G/K = \gl{n, \mathbb{R}}/\ortho{n}$ that an element of $G/K$ has the form of a coset $gK$, then $gg^T$ may be a representative of the coset in $\mathfrak{p}$. While some authors use $(gg^T)^{1/2}$, the key point being each choice is well defined independent of choice of representative.

\subsection{When $G$ is a compact Lie group}
Upon considering the compact cases, it is helpful to make use of a certain duality between compact and noncompact symmetric spaces. We again start with a noncompact Lie group $G$ and the Cartan involution $\theta$. Let $\mathfrak{g} = \mathfrak{k}+\mathfrak{p}$ be the Cartan decomposition. Then, define a new space,
\begin{equation}
    \mathfrak{g}_C:= \mathfrak{k} + i\mathfrak{p},
\end{equation}
where $i$ is the imaginary unit. The result in Lie theory implies that the new vector space $\mathfrak{g}_C$ is the tangent space of a compact Lie group, say $G_C$. In table \ref{tab:kpexamples}, the first and third columns labeled $\gl{n, \mathbb{R}}/\ortho{n}$, $\ortho{p, q}/(\ortho{p}\times\ortho{q})$ are noncompact tangent spaces. Their compact duals are, respectively, the second and fourth columns labeled $\un{n}/\ortho{n}$, $\ortho{n}/(\ortho{p}\times\ortho{q})$.
\edef\savedbaselineskip{\the\baselineskip\relax}
\begin{table}[h]
    \scriptsize
    \centering
    \begin{tabular}{c|c|c|c|c}
    $\dfrac{G}{K}$ & $\dfrac{\gl{n, \mathbb{R}}}{\ortho{n}}$ & $\dfrac{\un{n}}{\ortho{n}}$ & $\dfrac{\ortho{p, q}}{\ortho{p}\times\ortho{q}}$ & $\dfrac{\ortho{n}}{\ortho{p}\times\ortho{q}}$\\
    \hline
    $x_l$ & 
    \begin{blockarray}{ccc}
    \begin{block}{ccc} & $j$ & $k$ \\ \end{block}
    \begin{block}{c[cc]}
       $j$ & 0 & 1\\ $k$ & 0 & 0 \\
    \end{block}
  \end{blockarray} & - &
    \begin{blockarray}{ccccc}
    & $j$ & $k$ & $j'$ & $k'$ \\
    \begin{block}{c[cc|cc]}
    $j$ & & $1$ & & $1$\\
    $k$ & $\mi1$ & & $1$ & \\ \cline{2-5}
    $j'$ & & $1$ & & $1$\\
    $k'$ & $1$ & & $\mi1$ &\\
    \end{block}
    \end{blockarray}  & - \\
    \hline
    $\theta x_l$ & \begin{blockarray}{ccc}
    \begin{block}{ccc} & $j$ & $k$ \\ \end{block}
    \begin{block}{c[cc]}
       $j$ & 0 & 0\\ $k$ & \mi 1 & 0 \\
    \end{block}
  \end{blockarray} & -
    & \begin{blockarray}{ccccc} & $j$ & $k$ & $j'$ & $k'$ \\
    \begin{block}{c[cc|cc]}
    $j$ & & 1 & & \mi 1 \\
    $k$ & \mi1 & & \mi1 & \\ \cline{2-5}
    $j'$ & & \mi1 & & 1\\
    $k'$ & \mi1 & & \mi1 &\\
    \end{block}
    \end{blockarray}  & - \\
    \hline
    $k_l$ & \begin{blockarray}{ccc}
    \begin{block}{ccc} & $j$ & $k$ \\ \end{block}
    \begin{block}{c[cc]}
       $j$ & 0 & 1\\ $k$ & -1 & 0 \\
    \end{block}
  \end{blockarray} & \begin{blockarray}{ccc}
    \begin{block}{ccc} & $j$ & $k$ \\ \end{block}
    \begin{block}{c[cc]}
       $j$ & 0 & 1\\ $k$ & -1 & 0 \\
    \end{block}
  \end{blockarray} & \begin{blockarray}{ccccc}
    & $j$ & $k$ & $j'$ & $k'$ \\
    \begin{block}{c[cc|cc]}
    $j$ & & $1$ & & \\
    $k$ & $\mi1$ & &  & \\ \cline{2-5}
    $j'$ & & & & $1$\\
    $k'$ &  & & $\mi1$ &\\
    \end{block}
    \end{blockarray}  & 
    \begin{blockarray}{ccccc}
    & $j$ & $k$ & $j'$ & $k'$ \\
    \begin{block}{c[cc|cc]}
    $j$ & & $1$ & & \\
    $k$ & $\mi1$ & & & \\ \cline{2-5}
    $j'$ & &  & & $1$\\
    $k'$ &  & & $\mi1$ &\\
    \end{block}
    \end{blockarray}  \\
    \hline
    $p_l$ & \begin{blockarray}{ccc}
    \begin{block}{ccc} & $j$ & $k$ \\ \end{block}
    \begin{block}{c[cc]}
       $j$ & 0 & 1\\ $k$ & 1 & 0 \\
    \end{block}
  \end{blockarray} & \begin{blockarray}{ccc}
    \begin{block}{ccc} & $j$ & $k$ \\ \end{block}
    \begin{block}{c[cc]}
       $j$ & 0 & $i$\\ $k$ & $i$ & 0 \\
    \end{block}
  \end{blockarray}& 
    \begin{blockarray}{ccccc}
    & $j$ & $k$ & $j'$ & $k'$ \\
    \begin{block}{c[cc|cc]}
    $j$ & & & & $1$\\
    $k$ & & & $1$ & \\ \cline{2-5}
    $j'$ & & $1$ & & \\
    $k'$ & $1$ & & &\\
    \end{block}
    \end{blockarray}  & \begin{blockarray}{ccccc}
    & $j$ & $k$ & $j'$ & $k'$ \\
    \begin{block}{c[cc|cc]}
    $j$ & & & & $1$\\
    $k$ & & & $1$ & \\ \cline{2-5}
    $j'$ & & $\mi1$ & & \\
    $k'$ & $\mi1$ & & &\\
    \end{block}
    \end{blockarray} \\
    \hline
    \end{tabular}
    \caption{Examples of eigenmatrices $x_l, \theta x_l$ and ping pong matrices $k_l, p_l$. $k_l = x_l + \theta x_l$ and $p_l = x_l - \theta x_l$ as defined in \eqref{eq:pingpongmatrix}. $k_l, p_l$ are normalized to have $\pm 1$ entries. A block structure on row/columns $j, k$ and $j' := p+j$ and $k':=p+k$ are filled up with $0$ and $\pm 1$.}
    \label{tab:kpexamples}
\end{table}

\par Matrixwise, the ping pong matrices $k_j\in\mathfrak{k}, p_j\in\mathfrak{p}$ of $\mathfrak{g}$ are brought back to a new set of ping pong matrices $k_j\in\mathfrak{k}_C, ip_j\in \mathfrak{p}_C$ in $\mathfrak{g}_C$. Let's denote them by $\tilde{k}_j:=k_j$ and $\tilde{p}_j:=ip_j$. The role of the subspace $\mathfrak{a}$ is now played by $i\mathfrak{a}$. replacing $\ad_H$ by $\ad_{iH}$. We deduce a set of similar relationships for $\tilde{k}_j, \tilde{p}_j$ under $\ad_{iH}$, 
\begin{equation*}
    \ad_{iH} (\tilde{k}_j) = \alpha_j\tilde{p}_j, \hspace{1cm} \ad_{iH}(\tilde{p}_j) = -\alpha_j\tilde{k}_j.
\end{equation*}
In matrix form,
\begin{equation}\label{eq:compactadH}
    \ad_{iH}\twoone{\tilde{k}_j}{\tilde{p}_j} = \twotwo{0}{\alpha_j}{-\alpha_j}{0}\twoone{\tilde{k}_j}{\tilde{p}_j},
\end{equation}
which leads to the compact version of \eqref{eq:expadHkp1} and \eqref{eq:expadHkp2},
\begin{equation}\label{eq:compactexpadH}
    \exp(\ad_{iH})\twoone{\tilde{k}_j}{\tilde{p}_j} = \twotworr{\cos\alpha_j}{\sin\alpha_j}{-\sin\alpha_j}{\cos\alpha_j}\twoone{\tilde{k}_j}{\tilde{p}_j}.
\end{equation}
At the group level,the symmetric spaces $G/K$ and $G_C/K$ are called the duals of each other, and they appear in the same row of standard symmetric space charts. An example of eigenmatrices $x_j, \theta x_j$ and ping pong matrices for some symmetric spaces and their duals are presented in Table \ref{tab:kpexamples}.

\subsection{Jacobian of the map $\Phi$} 
We provide a generalized algorithm for finding a Jacobian of the decomposition $\Phi(Q, \Lambda) = Q\Lambda Q^{-1}$ (as we defined in \eqref{eq:mapPhi}) where $\Lambda\in A:=\exp(\mathfrak{a}), Q\in K$.
The $\mathfrak{k}$ and $\mathfrak{p}$ from the previous section are the tangent spaces of $K$ and $P$, respectively. As mentioned, we follow Helgason's derivation \cite[Theorem 5.8 of Ch.\RN{1}]{Helgason1984} and start by directly translating his proof into simple linear algebra terms. In Table \ref{tab:comparison1}, we have Helgason's derivation (Left) compared in the same row with linear algebra (Right). Table \ref{tab:comparison1} is using the noncompact symmetric space $G/K$ but the compact case are identical with replacing $\sin\alpha_j$ by $\sinh\alpha_j$.  

\begin{table}[h]
\centering
{\tabulinesep=0.9mm
\small
\begin{tabu}{|l|l|}
\hline
\multicolumn{1}{|c|}{\makecell{Classical notation\\(\cite[p.187]{Helgason1984}, Pf of Theorem 5.8, Ch.$\RN{1}$)}} & \multicolumn{1}{|c|}{\makecell{Linear algebra notation\\ (Matrix factorizations)}}\\
\hline

\multicolumn{2}{|c|}{Definitions}\\
\hline

$\Phi :$ $K\times A\to G/K$ & $\tilde\Phi:$ $K\times A\to P$ \\

$\Phi :$ $(k, a) \mapsto kaK$ & $\tilde\Phi :$ $(Q, \Lambda)\mapsto Q\Lambda Q^{-1}$ ({\color{red}$\Lambda^\frac{1}{2} = a$, $Q = k$})\\

$d\tau(g_0):(G/K)_o\to (G/K)_{g_0\cdot o}$ &$d\tilde\tau(g_0):X\mapsto g_0X(\theta g_0)^{-1}$\\
$d\pi : \mathfrak{g}\to(G/K)_o$ &  \footnotesize{($\theta k = k, k\in K$, $\theta p = p^{-1}, p\in P$)}\\

At $k\in K$, fix a tangent vector $d\tau(k)T_i^\alpha$ & At $Q\in K$, fix a tangent vector $dQ$\\

At Id, basis element $T_i^\alpha\in \mathfrak{k}$ & At Id, basis element $Q^{-1}dQ = k_j\in\mathfrak{k}$\\
\hline

\multicolumn{2}{|c|}{Derivations}\\
\hline
$2d\Phi(d\tau(k)T_i^\alpha, 0)$\textsuperscript{$\star$}  & $d\tilde\Phi(dQ, 0) = d(Q\Lambda Q^{\mis 1})$  \footnotesize{({\color{red}With $d\Lambda=0$})}\\

$= d\pi(2kT_i^\alpha a)$ & $=dQ\Lambda Q^{\mis 1}+ Q\Lambda dQ^{\mis 1} $\\

$= d\tau(ka)d\pi(2\text{Ad}(a^{\mis 1})T_i^\alpha)$\textsuperscript{$\star\star$}& $=d\tilde\tau(Q\Lambda^\frac{1}{2})\big[\Lambda^{\mis\frac{1}{2}}(Q^{\mis 1}dQ\Lambda + \Lambda dQ^{\mis 1}Q)\Lambda^{\mis\frac{1}{2}}\big]$\textsuperscript{$\diamondsuit$}\\

$=d\tau(ka)d\pi(\text{Ad}(a^{\mis 1})T_i^\alpha - \text{Ad}(a)T_i^\alpha)$ & $=d\tilde\tau(Q\Lambda^\frac{1}{2}){\big[\Lambda^{\mis\frac{1}{2}}k_j\Lambda^\frac{1}{2} - \Lambda^\frac{1}{2}k_j\Lambda^{\mis\frac{1}{2}}\big]}$\\
 
\footnotesize{\big({\color{red} Let $H$ be such that $\exp(H) = a = \Lambda^{\frac{1}{2}}$}\big)} &\footnotesize{\big({\color{red}Note that $d\tilde\tau(Q\Lambda^\frac{1}{2})X = Q\Lambda^\frac{1}{2}X\Lambda^\frac{1}{2}Q^{\mis 1}$}\big)} \\

$=d\tau(ka)d\pi(e^{\mis \text{ad}H}T_{i}^\alpha - e^{\text{ad} H}T_i^\alpha)$  & 
$=d\tilde\tau(Q\Lambda^\frac{1}{2})[\exp(H^T\!\otimes\! I\! -\! I\!\otimes\! H)k_j$\\

& $\hspace{1.3cm}-\exp(I\!\otimes\! H \!- \!H^T\!\otimes \!I)k_j]$ (by \eqref{eq:expad})\\

$= d\tau(ka)d\pi(\mi \alpha(H)^{\mis 1}[H, T_i^\alpha]2\sinh\alpha(H))$ & $=d\tilde\tau(Q\Lambda^{\frac{1}{2}})\big[(-2\sinh\alpha_j) p_j\big]$ (by \eqref{eq:expadHpingpong})\\

\textsuperscript{$\star$}\scriptsize{Since $\Lambda^\frac{1}{2}=a$, we have $2d\Phi = d\tilde\Phi$} & \textsuperscript{$\diamondsuit$}\scriptsize{Both $dQ\Lambda Q^{\mis 1}$ and $Q\Lambda dQ^{\mis 1}$ are at $Q\Lambda Q^{\mis 1}$, should}\\

\textsuperscript{$\star\star$}\scriptsize{This is $(d\tau(ka)\circ d\pi)(\text{Ad}(a^{\mis 1})T_i^\alpha$}) & \scriptsize{be brought back to identity (inside bracket).}\\
\hline
\end{tabu}}
\caption{\label{tab:comparison1}Line-by-line translation of the classical proof to linear algebra proof}
\vspace{-0.5cm}
\end{table}
\par From the last line of Table \ref{tab:comparison1} we can finish the story with two different directions, depending on the choice of the volume measure. First, if we use a \textbf{$\boldsymbol{G}$-invariant measure} (the ``canonical measure") of $P$, the measure is invariant under the map $d\tau$ or $d\tilde\tau$ (by definition of the invariant measure). Thus we can disregard $d\tilde\tau(Q\Lambda^\frac{1}{2})$(or $d\tau(ka)$) so that the Jacobian of $d\tilde\Phi$ (or $d\Phi$) only depends on the differential map $k_j\mapsto(\sinh\alpha_j) p_j$. Since $\cup_j \{k_j\}$ and $\cup_j \{p_j\}$ are both orthonormal bases, we obtain the Jacobian \eqref{eq:noncompactk1ak2jac}
\begin{equation*}
    \prod_{\alpha\in\Sigma^+}\sinh\alpha(H).
\end{equation*}
Note that eigenvalues $\pm\alpha_j$ belong to $x_j$ and $\theta x_j$, has the same corresponding $k_j$. (See \eqref{eq:pingpongmatrix} and above.) Thus we only take the positive roots $\Sigma^+$ above. 

\par The second choice of measure is the \textbf{Euclidean measure}, which is a wedge product of independent entrywise differentials. In this case the procedure is identical up to the factor $\sinh\alpha_j$, but the map $d\tilde\tau(Q\Lambda^\frac{1}{2})$ (equivalently $d\tau(ka)$) cannot be ignored. One needs to carefully compute the differential map $d\tilde\tau(Q\Lambda^\frac{1}{2})p_j = Q\Lambda^\frac{1}{2}p_j\Lambda^\frac{1}{2}Q^{-1}$ under the Euclidean measure. We can further use the fact that conjugation by the  matrix $Q$ always preserves the Euclidean measure, since the subgroup $K$ is always a set of matrices with an Orthogonal/Unitary type of property. Thus, one needs to compute the map $p_j \mapsto\Lambda^\frac{1}{2}p_j\Lambda^\frac{1}{2}$ and multiply its Jacobian by $\prod_{\alpha\in\Sigma^+}\sinh\alpha(H)$. 

\begin{remark}
For the compact Lie group $G$, we have $\sinh\alpha_j$ replaced by $\sin\alpha_j$ everywhere. Moreover, the last Jacobian computation step $p_j\mapsto\Lambda^\frac{1}{2} p_j \Lambda^\frac{1}{2}$ can be omitted for the compact cases, since $\Lambda^\frac{1}{2}$ is an orthogonal/unitary matrix for the compact cases. The map $d\tilde\tau(\Lambda^\frac{1}{2})$ preserves the Euclidean measure as $d\tilde\tau(Q)$. 
\end{remark}

\subsection{Extension to the generalized Cartan decomposition}
In the previous paragraphs, we studied the Jacobian of the usual Cartan decomposition. We now proceed to consider the generalized Cartan decomposition (Theorems \ref{thm:noncompactk1ak2decomp} and \ref{thm:compactk1ak2decomp}), its Jacobian \eqref{eq:noncompactk1ak2jac}, \eqref{eq:compactk1ak2jac} and the extension of Table \ref{tab:comparison1}. The derivations are analogous, analyzing subspaces of $\mathfrak{g}$ but one should now proceed with four tangent subspaces, $\mathfrak{k}_\tau\cap\mathfrak{k}_\sigma$, $\mathfrak{k}_\tau\cap\mathfrak{p}_\sigma$, $\mathfrak{p}_\tau\cap\mathfrak{k}_\sigma$, $\mathfrak{p}_\tau\cap\mathfrak{p}_\sigma$. Earlier work on these Jacobian related derivations may be found in \cite{flensted1980discrete,hoogenboom1983generalized}. The maximal subspace $\mathfrak{a}$ is now defined inside $\mathfrak{p}_\tau\cap\mathfrak{p}_\sigma$. We start with the same strategy: the tangent space $\mathfrak{g}$ is decomposed into the eigenspaces of the linear operator $\ad_H$ with $H\in \mathfrak{a}$. The eigenvalues $\pm\alpha_j$ still come in pairs but we have two eigenmatrices $x_j, \tau\sigma x_j$ for eigenvalue $\alpha_j$, and two eigenmatrices $\tau x_j, \sigma x_j$ for eigenvalue $-\alpha_j$. We define four vectors $v_1, v_2, w_1, w_2$ with the same roles as $k_j$ and $p_j$ played before, 
\begin{gather*}
    v_1:=x_j + \tau x_j + \sigma x_j + \tau\sigma x_j\in\mathfrak{k}_\tau\cap\mathfrak{k}_\sigma,\hspace{0.5cm}v_2:=x_j - \tau x_j - \sigma x_j + \tau\sigma x_j\in\mathfrak{p}_\tau\cap\mathfrak{p}_\sigma\\
    w_1:=x_j - \tau x_j + \sigma x_j - \tau\sigma x_j\in\mathfrak{p}_\tau\cap\mathfrak{k}_\sigma,\hspace{0.5cm}w_2:=x_j + \tau x_j - \sigma x_j - \tau\sigma x_j\in\mathfrak{k}_\tau\cap\mathfrak{p}_\sigma
\end{gather*}
and these have similar ping pong relationships by $\ad_H$ like $k_j$ and $p_j$, 
\begin{gather*}
    \ad_H(v_1) = \alpha_j v_2\hspace{1cm}\ad_H(v_2) = \alpha_j v_1\\
    \ad_H(w_1) = \alpha_j w_2\hspace{1cm}\ad_H(w_2) = \alpha_j w_1.
\end{gather*}
We can similarly extend \eqref{eq:expadHpingpong} and other relationships, and proceed as in Table \ref{tab:comparison1} to obtain \eqref{eq:noncompactk1ak2jac} and \eqref{eq:compactk1ak2jac}.

\section{Random matrix ensembles: compact and noncompact}
\subsection{Compact symmetric spaces}
In compact cases the random matrices could be simply determined from the Haar measure of the compact Lie group $G$ \cite{duenezthesis,duenez2004random}, since the compactness of $G$ turns the Haar measure into a probability measure. In the following sections we discuss random matrix ensembles based on 10 types of Riemannian symmetric space classification by Cartan. For the triple $(G, K_\sigma, K_\tau)$, we start with the cases where $G/K_\sigma$ and $G/K_\tau$ are of the same types in Sections \ref{sec:circular} and \ref{sec:jacobi}. Then, in Section \ref{sec:mixed} we will discuss the ``mixed types" where $G/K_\sigma$ and $G/K_\tau$ are different types under Cartan's classification. 

\subsection{Noncompact symmetric spaces}\label{sec:noncompactRMT}
Sections \ref{sec:hermite} and \ref{sec:Laguerre} discuss classical random matrix ensembles associated to noncompact symmetric spaces. Hermite and Laguerre eigenvalue joint densities arise as result of \eqref{eq:noncompactk1ak2jac}, using Theorem \ref{thm:symspacejacobians} on noncompact symmetric spaces. As opposed to compact Lie groups and symmetric spaces where the Haar measure or $G$-invariant measure can be normalized by a constant to a probability measure, invariant measures on noncompact manifolds cannot be normalized to one by constants. A normalizing factor $\mathcal{S}$ should be introduced to complete the construction of a probability measure. Therefore, random matrices on a noncompact manifold face an innate problem if we proceed analogous to Sections \ref{sec:circular} and \ref{sec:jacobi}:
\begin{itemize}
    \item The choice of the probability measure on noncompact $G/K$ is not unique. 
\end{itemize}
In \cite{duenez2004random}, Due{\~n}ez also addresses this problem along the noncompact duals.

\par As we push the measure forward to the subgroup $A$, the resulting measure should be a symmetric function of independent generators of $A$. Hence the probability measure $\mathcal{I}(g)$ of the random matrix ensemble is the Haar or $G$-invariant measure on $G$ or $G/K$, multiplied by some symmetric function $\mathcal{S}$ on $A$,  
\begin{equation*}
    \mathcal{I}(g) = \mathcal{S}(a)\mu(g),
\end{equation*}
where $g = k_1ak_2$ or $g = kak^{-1}$ and $\mu(g)$ is an invariant measure. Using \eqref{eq:noncompactk1ak2jac}, the measure on $A$ is induced, 
\begin{equation*}
    \mathcal{I}(g) = dk\cdot\mathcal{S}(a)\bigg(\prod_{\alpha\in\Sigma^+}\sinh\alpha(H)\bigg)dH_1\cdots dH_{\text{dim(A)}},
\end{equation*}
which means even though the measure $\mathcal{I}$ changes, the measure on $A$ still differs only by a normalization function. The traditional choice of $\mathcal{S}$ has been made such that $\mathcal{I}(g)$ can be constructed from independent Gaussian distributions endowed on matrix entries. In fact, one could also endow a Gaussian distribution on the Riemannian manifold (symmetric space) itself \cite{heuveline2021gaussian}. 

\par An alternative approach which appears in \cite{altland1997nonstandard} is to put a probability measure on the tangent space of the symmetric space, $\mathfrak{p}$. In particular, independent Gaussian distribution endowed on the elements of $\mathfrak{p}$ give rise to Hermite and Laguerre ensembles by Theorem \ref{thm:liealgebrajac}. We will follow this alternative approach. 

\subsection{Non-probability measure of noncompact groups}\label{sec:noncompactHSVD}
As discussed in Section \ref{sec:noncompactRMT}, the Haar measure of a noncompact group $G$ or a noncompact symmetric space $G/K$ is not a probability measure. However, we can force an analogue of a random matrix theory. Imagine for example a noncompact $\kak$ decomposition $G = K_\sigma A K_\tau$ with $(G, K_\sigma, K_\tau) = (\gl{n, \mathbb{R}}, \ortho{n}, \ortho{p, q})$. This is called the Hyperbolic SVD \cite{onn1989hyperbolic} where any real invertible matrix $M$ is factored into the product of an orthogonal matrix $O$, a positive diagonal matrix $\Lambda$ and an indefinite orthogonal matrix $V$. From the Haar measure and \eqref{eq:noncompactk1ak2jac} of $\gl{n, \mathbb{R}}$ one obtains the Jacobian,
\begin{equation*}
    \prod_{\substack{1\leq j < k \leq p \\ p<j<k\leq n}} |\lambda_j -\lambda_k| \prod_{\substack{1\leq j \leq p\\p<j\leq n}} |\lambda_j + \lambda_k| \prod_{j=1}^n |\lambda_j|^{-\frac{2n+1}{2}} d\lambda_1 \dots d\lambda_n,
\end{equation*}
where $\lambda_j$ is the squared diagonal entries of $\Lambda$ for all $j$'s. 

\par One can impose a Gaussian-like density function (although not a probability density) on the group $\gl{n, \mathbb{R}}$, such as $\exp(-\tr(gI_{p,q}g^T)/2)\prod dg_{jk}$, where $I_{p, q} = \text{diag}(I_p, -I_q)$. In terms of independent entries of $g$ this is
\begin{equation}\label{eq:hsvdrmt}
    \prod_{\text{first $p$ columns}} e^{-g_{jk}^2/2} \prod_{\text{last $q$ columns}} e^{g_{jk}^2/2} \prod dg_{jk}. 
\end{equation}
Since the Haar measure of $\gl{n, \mathbb{R}}$ is $|\det(g)|^{-n}\prod dg_{jk}$, \eqref{eq:hsvdrmt} becomes (after integrating out $\ortho{n}$ and $\ortho{p, q}$), 
\begin{equation*}
    \prod_{j<k}|\lambda_j - \lambda_k| \prod_{j=1}^n|\lambda|^{-\frac{n+1}{2}}e^{-\sum\lambda_j/2} d\lambda_1\dots d\lambda_n,
\end{equation*}
where $\lambda_1, \dots, \lambda_p\geq 0$ are the first $p$ squared diagonal values of $\Lambda$ and $\lambda_{p+1},\dots,\lambda_n\leq 0$ are the last $q$ squared diagonal values of $\Lambda$, multiplied by $-1$. Extending this approach to find a proper random matrix probability measure on noncompact Lie groups and symmetric spaces with joint probability densities on the subgroup $A$, is still an open problem.

\section{Compact A$\RN{1}$, A, A$\RN{2}$: Circular ensembles}\label{sec:circular}

The joint probability density of the circular ensemble is ($\beta=1,2,4$),
\begin{equation*}
    E_n^{(\beta)}(\theta) \propto \prod_{j<k}|e^{i\theta_j} - e^{i\theta_k}|^\beta.
\end{equation*}
Circular ensembles $\beta = 1, 2, 4$ (COE, CUE, CSE) arise as the eigenvalues of special unitary matrices. As we discuss in the introduction, circular ensembles are completely classified by (compact) symmetric spaces of the types A$\RN{1}$, A and A$\RN{2}$, respectively \cite{dyson1970correlations,duenez2004random}. The $\kak$ decomposition associated to each symmetric space recovers the KAK decomposition. The restricted root system (and dimensions) of A$\RN{1}$, A, A$\RN{2}$ are given as the following: ($1\le j< k \le n$)
\begin{equation}\label{tab:circularroots}
    \renewcommand\arraystretch{1.2}
    \begin{array}{r|c|}
    \cline{2-2}
    \alpha(H) & \pm (h_j-h_k)\\
    \cline{2-2}
    m_{\alpha} & \beta\\
    \cline{2-2}
    \end{array}
\end{equation}
Since we have compact symmetric spaces, we use \eqref{eq:compactk1ak2jac} from either Theorem \ref{thm:compactk1ak2decomp} or \ref{thm:symspacejacobians} with these root systems.

\subsection{Compact A$\RN{1}$, $\beta=1$ COE}\label{sec:COE}
The compact symmetric space A$\RN{1}$ is $G/K = \un{n}/\ortho{n}$. The involution on $\un{n}$ has no free parameter and the $\kak$ decomposition is equivalent to the KAK decomposition of $\un{n}/\ortho{n}$. (In other words, we only have Cartan's coordinate system.) The maximal abelian torus $A$ is,
\begin{equation*}
A = \{\text{Diagonal matrices with entries $e^{ih_j}$, where $h_j\in\mathbb{R}$}\}.
\end{equation*}
From the KAK decomposition, we obtain $U = O_1DO_2$, a factorization of a unitary matrix $U$ into the product of two orthogonal matrices $O_1, O_2\in\ortho{n}$ and a unit complex diagonal matrix $D\in A$. This decomposition first appears in \cite{fuehr2018note} and we will call this the \textit{ODO decomposition}. The corresponding Jacobian (up to constant) from \eqref{eq:compactk1ak2jac} using \eqref{tab:circularroots}, $\beta = 1$ is (with the change of variables $\theta_j = 2h_j$),
\begin{equation*}
\bigg(\prod_{j<k}\sin(h_j - h_k)\bigg) dh_1\cdots dh_n\,\,\, \propto\,\,\, \prod_{j<k}\vert e^{i\theta_j} - e^{i\theta_k}\vert d\theta_1\cdots d\theta_n.
\end{equation*}
This is the joint density of the COE. In other words, doubled angles in the diagonal of $D$ from the ODO decomposition of a Haar distributed unitary matrix is the COE distribution. Moreover if we identify $G/K$ as the set of unitary symmetric matrices $P$, the map \eqref{eq:mapPhi} is the factorization $S = O\Lambda O^T$, the eigendecomposition of a unitary symmetric matrix $S$ with real eigenvectors $O$. In terms of Remark \ref{rem:representations}, $U = O_1DO_2$ becomes $S = UU^T = O_1D^2O_1^T$ where $\Lambda = D^2$. To obtain the COE, we can utilize both factorizations:
\begin{itemize}
  \item Two times the angles of the unit diagonal values of $D$ from the ODO decomposition of $U\in\text{Haar}(\un{n})$. 
  \item The angles of the (unit) eigenvalues of a unitary symmetric matrix obtained from $UU^T$, $U\in\text{Haar}(\un{n})$.
\end{itemize}
\begin{remark}
The second algorithm above would be obvious since the days of Dyson \cite{dyson1962a,dyson1962threefold} while we are not aware of the first algorithm appearing in the literature. 
\end{remark}

\subsection{Compact A, $\beta = 2$ CUE}
The symmetric space of the compact type A is $G/K = \un{n}\times\un{n}/\un{n}$. The restricted root system returns to the usual root system $A_n$ of the classical semisimple Lie algebra. A maximal torus of $\un{n}$ is a Cartan subalgebra of $\un{n}$. Weyl's integration formula agrees with \eqref{eq:compactk1ak2jac} obtaining the CUE, which is the eigenvalues of a Haar distributed unitary matrix. The derivation of the CUE can be found in many random matrix textbooks \cite{blower2009random,forrester2010log,mehta2004random}. 

\subsection{Compact A$\RN{2}$, $\beta = 4$ CSE}\label{sec:CSE}
The involution $X\mapsto -J_n^TX^{T}J_n$ where $J_n:=\begin{bsmallmatrix} 0 & I_n\\-I_n& 0\end{bsmallmatrix}$ on the tangent space of $\un{2n}$ results the symmetric space $\un{2n}/\symp{n}$ where $\symp{n}=\symp{2n, \mathbb{C}}\cap\un{2n}$. A choice of maximal abelian torus $A$ is
\begin{equation*}
    A = \{\text{diag}(\Tilde{D}, \Tilde{D}):\tilde{D} = \text{diag}(e^{ih_1}, \dots, e^{ih_n}),\,\, h_j\in\mathbb{R}\}.
\end{equation*}
Again from the KAK decomposition, we obtain $U = Q_1DQ_2$, a factorization of a $2n\times 2n$ unitary matrix $U$ into the product of two unitary symplectic matrices $Q_1, Q_2\in\symp{n}$ and a unit complex diagonal matrix $D\in A$. We call this the \textit{QDQ decomposition}. The corresponding Jacobian from \eqref{eq:compactk1ak2jac} using \eqref{tab:circularroots}, $\beta = 4$ is,
\begin{equation*}
\bigg(\prod_{j<k}\sin^4(h_j - h_k)\bigg) dh_1\cdots dh_n\,\,\,\propto\,\,\,\prod_{j<k}\vert e^{i\theta_j} - e^{i\theta_k}\vert^4 d\theta_1\cdots d\theta_n,
\end{equation*}
with the change of variables $\theta_j = 2h_j$. This is the CSE distribution. Similarly as in Section \ref{sec:COE}, the eigendecomposition of unitary skew-Hamiltonian matrix obtained by $UJ_nU^TJ_n^T$, $U\in\text{Haar}(2n)$ is equivalent to the map \eqref{eq:mapPhi}. Two numerical algorithms for sampling the CSE are the following:
\begin{itemize}
    \item Two times the angles of the first $n$ unit diagonal values of $D$ from the QDQ decomposition of $U\in\text{Haar}(\un{2n})$.
    \item The angles of the first $n$ (unit) eigenvalues of a unitary skew-Hamiltonian matrix obtained by $UJ_nU^TJ_n^T$ with $U\in\text{Haar}(\un{2n})$.
\end{itemize}

\section{Compact BD$\RN{1}$, A$\RN{3}$, C$\RN{2}$: Jacobi ensembles}\label{sec:jacobi}
The joint probability density of the Jacobi ensemble is ($\beta = 1,2,4$), 
\begin{equation*}
    J_{\alpha_1, \alpha_2}^{(\beta), m}(x) \propto \prod_{j<k}|x_j - x_k|^\beta\prod_{j=1}^m x_j^{\alpha_1}(1-x_j)^{\alpha_2}.
\end{equation*}
In \cite{duenezthesis,duenez2004random}, Jacobi ensembles $\beta = 1, 2, 4$ arise from the KAK decompositions of seven compact symmetric spaces, BD$\RN{1}$, A$\RN{3}$, C$\RN{2}$, D$\RN{3}$, BD, C, C$\RN{1}$. Especially, types BD$\RN{1}$, A$\RN{3}$, C$\RN{2}$ give multiple Jacobi densities as follows: (for integeres $p\geq q$)
\begin{equation*}
    \prod_{j<k}|x_j - x_k|^\beta \prod_{j=1}^q x_j^{\frac{\beta}{2}-1}(1-x_j)^{\frac{\beta(p-q+1)}{2}-1},
\end{equation*}
and the powers of $x_j$'s are fixed to $\frac{\beta}{2}-1$. The remaining four cases add four more parameter points, which could be found in \cite{duenez2004random,duenezthesis}. In this paper we omit these four cases as these do not have any further results, as they only have Cartan's coordinates (no free parameter for the Cartan involution). 

\par The $\kak$ decomposition $G=K_\tau A K_\sigma$ of the compact types BD$\RN{1}$-$\RN{1}$, A$\RN{3}$-$\RN{3}$, C$\RN{2}$-$\RN{2}$ are exactly the \textit{CS decomposition} (CSD) \cite{davis1969some,davis1970rotation} of orthogonal, unitary, unitary symplectic matrices, respectively. The decomposition $\Phi$ of the symmetric space (Theorem \ref{thm:symspacejacobians}) is the GSVD coordinate systems we discussed in Sections \ref{sec:introGSVD} and \ref{sec:onevsmany}. Assume $r \ge p \ge q\ge s$ and $n = p+q = r+s$ throughout this section. We note that with the KAK decomposition, only the cases $p=r, q=s$ are obtained for the CSD. The root system associated to the $\kak$ decomposition is the following ($1\le j< k\le s$). 
\begin{equation}\label{tab:jacobiroots}
    \renewcommand\arraystretch{1.2}
    \begin{array}{r|c|c|c|}
    \cline{2-4}
    \alpha(H) & \pm (\theta_j\pm \theta_k) & \pm \theta_j & \pm 2\theta_j\\
    \cline{2-4}
    m_{\alpha}^+ & \beta  & \beta(p-s) & \beta-1\\
    \cline{2-4}
    m_{\alpha}^- & 0  & \beta(q-s) & 0\\
    \cline{2-4}
    \end{array}
\end{equation}
For all three $\beta$ we have the identical maximal abelian subgroup $A$,
\begin{equation*}
    A = \{\text{$n\times n$ matrices with the block structure} \begin{bsmallmatrix}
\,\,C & & S \\
 & I_{p-q} & \\
-S & & C
\end{bsmallmatrix}\}
\end{equation*}
where $C, S\in\mathbb{R}^{s\times s}$ are diagonal matrices with cosine, sine values of $\theta_1, \dots, \theta_s$ on diagonal entries, respectively

\subsection{Compact BD$\RN{1}$-$\RN{1}$, $\beta = 1$ Jacobi}
With the involution $X\mapsto I_{p,q}XI_{p,q}$ on the tangent space of $\ortho{n}$ we obtain the symmetric space BD$\RN{1}$, $G/K = \ortho{n}/(\ortho{p}\times\ortho{q})$, where $I_{p,q}:= \text{diag}(I_p, -I_q)$. With two symmetric pairs $(\ortho{n}, \ortho{p}\times\ortho{q})$ and $(\ortho{n}, \ortho{r}\times\ortho{s})$, we obtain the $\kak$ decomposition\footnote{Equivalently, one can imagine the GSVD of \eqref{eq:introgsvdcoordinates}.} BD$\RN{1}$-$\RN{1}$: 
\begin{equation*}
\begin{bmatrix}\text{$n$-by-$n$}\\\text{Orthogonal}\end{bmatrix}
= \begin{bmatrix}
O_p & \\ & O_q
\end{bmatrix}
\begin{bmatrix}
C & & S \\
 & I_{n-2s} & \\
-S & & C
\end{bmatrix}
\begin{bmatrix}
O_r & \\& O_s \\
\end{bmatrix}.
\end{equation*}
This is the real CSD. From \eqref{eq:compactk1ak2jac} using \eqref{tab:jacobiroots} $\beta=1$, we obtain the Jacobian   
\begin{equation*}
    d\mu(H) \propto \prod_{j<k}\big(\sin(\theta_j-\theta_k)\sin(\theta_j+\theta_k)\big)\prod_j\big((\sin \theta_j)^{(p-s)}(\cos \theta_j)^{(q-s)}\big)d\theta_1\dots d\theta_s.
\end{equation*}
Using trigonometric identities with change of variables $x_j = \cos^2\theta_j = \frac{1+\cos(2\theta_j)}{2}$,
\begin{equation*}
d\mu(H) \propto \prod_{j<k} \vert x_j - x_k\vert \prod_{j=1}^s x_j^{\frac{1}{2}(q-s+1)-1}(1-x_j)^{\frac{1}{2}(p-s+1)-1} dx_1\dots dx_s.
\end{equation*}
which is the joint density of the $\beta=1$ Jacobi ensemble $J_{\alpha_1, \alpha_2}^{(1), s}$ if we let $\alpha_1 = \frac{1}{2}(q-s+1)-1$, $\alpha_2 = \frac{1}{2}(p-s+1)-1$. This result agrees with \cite[Theorem 1.5]{edelman2008beta}, where the squared CSD cosine values of a Haar distributed orthogonal matrix are distributed as $\beta=1$ Jacobi ensemble. Moreover, recall the fact that the QL decomposition $G = QL$ (a lower triangular analogue of the QR decomposition) of an $n\times n$ independent Gaussian matrix $G$ obtains a Haar distributed orthogonal matrix $Q$. Since the GSVD \cite{paige1981gsvd,van1976gsvd} is equivalent to the combination of the QL decomposition and the CSD, one can take the GSVD of a real independent Gaussian matrix to obtain the same $\beta=1$ Jacobi ensemble. Two associated numerical algorithms are the following. ($a = q-s, b = p-s$)
\begin{itemize}
    \item The squared CSD cosine values of a Haar distributed $m\times m$ orthogonal matrix $(m = 2s+a+b)$ with row/column partitions $(s+a, s+b)$ and $(s, s+a+b)$. 
    \item The squared cosine values, where the tangent values are the generalized singular values of real $(s+a)\times s$ and $(s+b)\times s$ Gaussian matrices. 
\end{itemize}

\subsection{Compact A$\RN{3}$-$\RN{3}$, $\beta = 2$ Jacobi}\label{sec:jacobibeta2}
Two symmetric pairs of compact $A\RN{3}$ type are $(\un{n}, \un{p}\times\un{q})$ and $(\un{n}, \un{r}\times\un{s})$. The $\kak$ decomposition of the group $G$ is the CSD of unitary matrices, and the decomposition of $G/K_\sigma = \un{n}/(\un{r}\times \un{s})$ is the complex GSVD described in Section \ref{sec:introGSVD} and equation \eqref{eq:introgsvdcoordinates}. Using \eqref{eq:compactk1ak2jac} with the root system \eqref{tab:jacobiroots}, $\beta=2$ and change of variables $x_j = \cos^2\theta_j$ as above, we obtain the Jacobian,
\begin{align*}
    \prod_{j<k}\big(\sin(\theta_j-\theta_k)&\sin(\theta_j+\theta_k)\big)^2\prod_j\big((\sin \theta_j)^{2(p-s)}(\cos \theta_j)^{2(q-s)} \sin(2\theta_j)\big)d\theta_1\dots d\theta_s\\
    \propto \,\,\,&\prod_{j<k} \vert x_j - x_k\vert^2 \prod_j x_j^{q-s}(1-x_j)^{p-s}  dx_1\dots dx_s,
\end{align*}
which is the $\beta=2$ Jacobi density $J_{\alpha_1, \alpha_2}^{(2), s}$ with $\alpha_1 = q-s, \alpha_2 = p-s$. Numerically the following could be utilized to obtain $\beta=2$ Jacobi densities.  ($a = q-s, b = p-s$)  
\begin{itemize}
    \item The squared CSD cosine values of a Haar distributed $m\times m$ unitary matrix $(m=2s+a+b)$ with row/column partitions $(s+a, s+b)$ and $(s, s+a+b)$.
    \item The squared cosine values, where the tangent values are the generalized singular values of complex $(s+a)\times s$ and $(s+b)\times s$ Gaussian matrices. 
\end{itemize}

\subsection{Compact C$\RN{2}$-$\RN{2}$, $\beta = 4$ Jacobi}
Jacobi densities with $\beta=4$ are similarly obtained from two symmetric spaces $\symp{n}/(\symp{p}\times\symp{q})$ and $\symp{n}/(\symp{r}\times\symp{s})$, where both are compact type C$\RN{2}$. We identify $\symp{n}$ as the quaternionic unitary group, $\un{n, \mathbb{H}}:=\{g\in\gl{n, \mathbb{H}}|g^Dg = I_n\}$. The $\kak$ decomposition is the CSD of a quaternionic unitary matrix. Using \eqref{eq:compactk1ak2jac} with the root system \eqref{tab:jacobiroots} $\beta=4$, we obtain the following Jacobian with the change of variables $x_j = \cos^2\theta_j$,
\begin{align*}
    \prod_{j<k}\big(\sin(\theta_j-\theta_k)&\sin(\theta_j+\theta_k)\big)^4\prod_j\big((\sin \theta_j)^{4(p-s)}(\cos \theta_j)^{4(q-s)} \sin^3(2\theta_j)\big)d\theta_1\dots d\theta_s\\
    \propto &\,\,\, \prod_{j<k} \vert x_j - x_k\vert^4 \prod_j x_j^{2(q-s)+1}(1-x_j)^{2(p-s)+1} dx_1\dots dx_s,
\end{align*}
which is the $\beta=4$ Jacobi density $J_{\alpha_1, \alpha_2}^{(4), s}$ with $\alpha_1=2(q-s)+1, \alpha_2 = 2(p-s)+1$. The associated numerical algorithm is the following. ($a = q-s, b = p-s$)
\begin{itemize}
    \item The squared cosine CS values of a Haar distributed $m\times m$ quaternionic unitary matrix $(m=2s+a+b)$ with row/column partitions $(s+a, s+b)$ and $(s, s+a+b)$.
\end{itemize}
\begin{remark}
Again, one can use the GSVD on quaternionic Gaussian matrices to obtain the classical $\beta = 4$ Jacobi ensemble.
\end{remark}

\section{Compact mixed types: More circular and Jacobi}\label{sec:mixed}
In this section we show even more cases such that a single symmetric space leading to multiple random matrix theories. We introduce $\kak$ decompositions with two compact symmetric spaces, each from different Cartan types. The classification of such $\kak$ decompositions is studied in \cite{matsuki2002classification}, with the computation of corresponding root systems. As always the names of these decompositions are combinations of two Cartan types, i.e., A$\RN{1}$-$\RN{2}$ represents $(G, K_\sigma, K_\tau) = (\un{2n}, \ortho{2n}, \symp{2n})$. 

\begin{figure}[h]
    \centering
    {\large Possible parameters $(\alpha_1, \alpha_2)$ of the $\beta=2$ Jacobi ensemble \\ \large\vspace{-0.6cm} $$J_{\alpha_1, \alpha_2}(x) \sim \prod_{j<k}|x_j - x_k|^2 \prod_{j=1}^m x_j^{\alpha_1}(1-x_j)^{\alpha_2}$$\vspace{-0.2cm}\\}
    \includegraphics[width=4.2in]{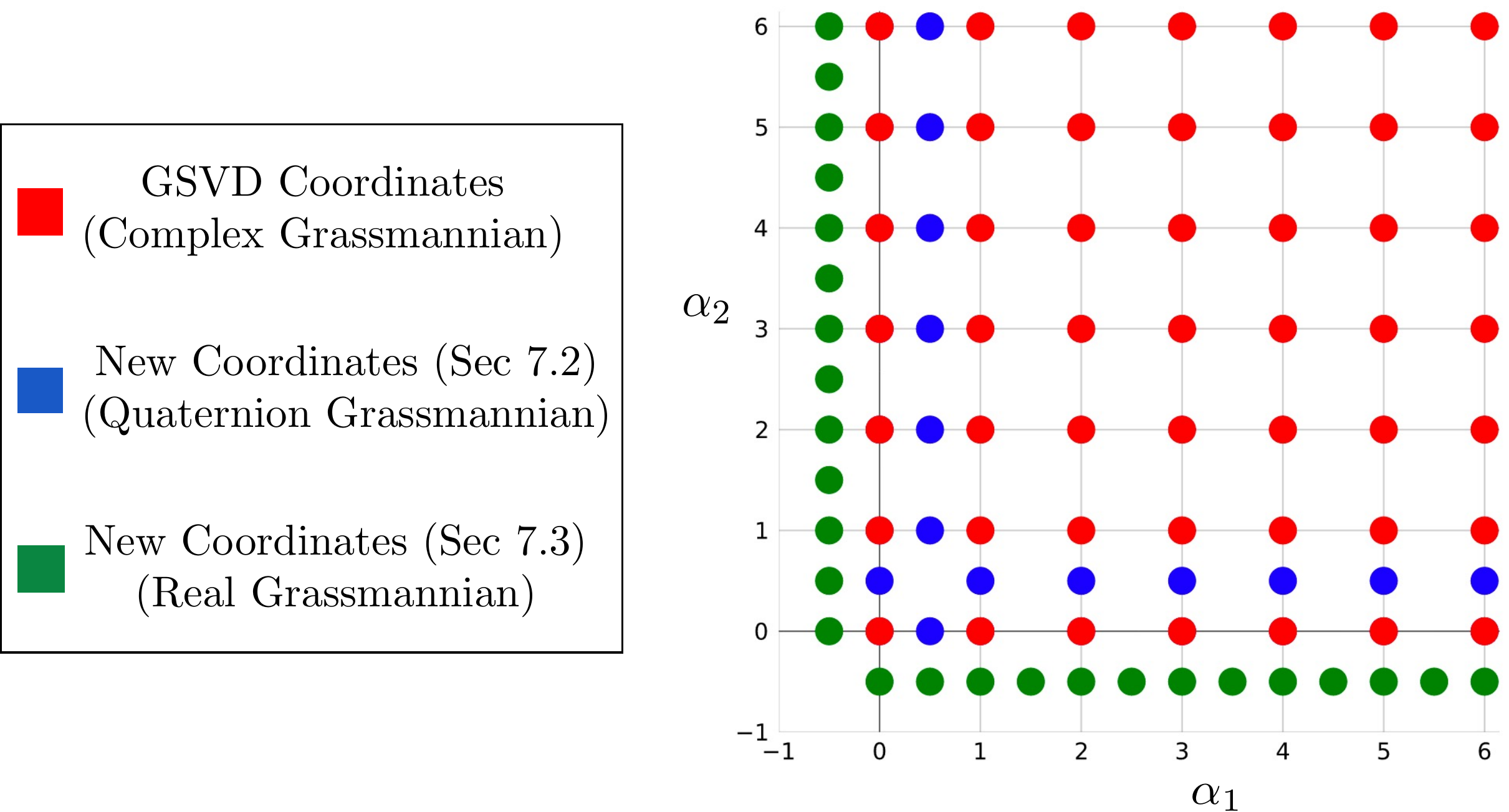}
    \caption{The parameter space $(\alpha_1, \alpha_2)\in(-1, \infty)^2$ of the $\beta=2$ Jacobi ensemble covered by symmetric spaces. The GSVD coordinate systems on the complex Grassmannian manifold (A$\RN{3}$-$\RN{3}$) discussed in Section \ref{sec:jacobi} covers red dots. A new coordinate system on the quaternionic (resp. real) Grassmannian manifold discussed in Section \ref{sec:newjacobi1} (resp. \ref{sec:newjacobi2}) of type C$\RN{1}$-$\RN{2}$ (resp. D$\RN{1}$-$\RN{3}$) represent blue (resp. green) dots.}
    \label{fig:newjacobi2}
\end{figure}

\subsection{Compact A$\RN{1}$-$\RN{2}$}
The two compact symmetric spaces are types A$\RN{1}$ and A$\RN{2}$, $\un{2n}/\ortho{2n}$ and $\un{2n}/\text{Usp}(2n)$. A maximal abelian subalgebra $\mathfrak{a}\subset\mathfrak{p}_\sigma \cap\mathfrak{p}_\tau$ is the set of all matrices $\text{diag}(i\theta_1, \dots, i\theta_n, i\theta_1, \dots, i\theta_n)$ for $(\theta_1, \dots, \theta_n)\in\mathbb{R}^n$. The subgroup $A$ is the following:
\begin{equation*}
    A = \{\text{diag}(\Tilde{D}, \Tilde{D}):\tilde{D} = \text{diag}(e^{i\theta_1}, \dots, e^{i\theta_n})\}.
\end{equation*}
The root system is given as
\begin{equation}\label{eq:a12roots}
    \renewcommand\arraystretch{1.3}
    \begin{array}{r|c|}
    \cline{2-2}
    \alpha(H) & \pm(\theta_j-\theta_k) \\
    \cline{2-2}
    m_{\alpha}^+ & 2\\
    \cline{2-2}
    m_{\alpha}^- & 2\\
    \cline{2-2}
    \end{array}
\end{equation}
Using \eqref{eq:compactk1ak2jac}, we obtain the Jacobian ($\xi_j = 4\theta_j$)
\begin{equation*}
    |e^{i\xi_j} - e^{i\xi_k}|^2d\xi_1 \cdots d\xi_n,
\end{equation*}
which is the joint probability density of the CUE. Hence, we obtain another sampling method for the CUE.

\subsection{Compact A$\RN{1}$-$\RN{3}$, C$\RN{1}$-$\RN{2}$}\label{sec:newjacobi1}
The two symmetric spaces in each case is the following: 
\begin{align*}
    G/K_\tau, G/K_\sigma &= \un{n}/\ortho{n}, \un{n}/(\un{p}\times \un{q}) \\
    G/K_\tau, G/K_\sigma &= \un{n, \mathbb{H}}/\un{n}, \un{n, \mathbb{H}}/(\un{p, \mathbb{H}}\times \un{q, \mathbb{H}})
\end{align*}
The subgroup $A$ is computed as follows.
\begin{equation*}
    A = \{\text{$n\times n$ matrices with the block structure} \begin{bsmallmatrix}
\,\,C & & \eta S \\
 & I_{p-q} & \\
\eta S & & C
\end{bsmallmatrix}\},
\end{equation*}
where $C, S$ are $q\times q$ diagonal matrices with cosine and sine values of $q$ angles $\theta_1, \dots, \theta_q$ on their diagonals. The imaginary unit $\eta$ is $i$ for A$\RN{1}$-$\RN{3}$ ($\beta=1$) and $\eta = j, k$ for C$\RN{1}$-$\RN{2}$ ($\beta=2$).\footnote{In fact, if we select the subgroup $K$ of $\un{n, \mathbb{H}}/\un{n}$ to be the unitary group with the imaginary unit $j$, we can also obtain $\eta = i$.} The root system is the following. ($\beta=1, 2$)
\begin{equation}\label{eq:newjacobiroots1}
    \renewcommand\arraystretch{1.3}
    \begin{array}{r|c|c|c|}
    \cline{2-4}
    \alpha(H) & \pm(\theta_j\pm\theta_k) & \pm\theta_j & \pm2\theta_j\\
    \cline{2-4}
    m_{\alpha}^+ & \beta & \beta(p-q) & \beta-1 \\
    \cline{2-4}
    m_{\alpha}^- & \beta & \beta(p-q) & \beta \\
    \cline{2-4}
    \end{array}
\end{equation}
Using \eqref{eq:compactk1ak2jac} with the above root system above we obtain the following Jacobian:
\begin{equation}\label{eq:newjacobi1}
    \prod_{j<k} |x_j - x_k|^{\beta} \prod_{j=1}^q x_j^{\frac{\beta(p-q+1)}{2}-1}(1-x_j)^{\frac{\beta-1}{2}} ,
\end{equation}
where $x_j = \sin^22\theta_j$ for all $j$. The $\beta=1$ case of \eqref{eq:newjacobi1} can be obtained from the CS decomposition approach too, with $(n+1)\times (n+1)$ orthogonal matrix and partitions $(p, q+1)$ and $(p+1, q)$. See Figure \ref{fig:manytomany}. The parameters of $\beta=2$ \eqref{eq:newjacobi1} cannot be obtained by the complex CSD, thus fall outside of the classical parameters. 

\subsection{Compact D$\RN{1}$-$\RN{3}$, A$\RN{2}$-$\RN{3}$}\label{sec:newjacobi2}
Another family of the $\kak$ decomposition arise from the following pairs of compact symmetric spaces. ($\beta=2, 4$)
\begin{align*}
    G/K_\tau, G/K_\sigma &= \ortho{2n}/\un{n}, \ortho{2n}/(\ortho{2p}\times \ortho{2q}) \\
    G/K_\tau, G/K_\sigma &= \un{2n}/\un{n, \mathbb{H}}, \un{2n}/(\un{2p}\times \un{2q}). 
\end{align*}
Under Cartan's classification they are types D$\RN{1}$-$\RN{3}$ and A$\RN{2}$-$\RN{3}$, respectively. The subgroup $A$ can be computed as
\begin{equation*}
    A = \{\text{$2n\times 2n$ matrices with the block structure} \begin{bsmallmatrix}
I_{p-q} & & & \\
 & C\otimes I_2 & & S\otimes J_1\\
& & I_{p-q} & \\
& S\otimes J_1 & & C \otimes I_2
\end{bsmallmatrix}\},
\end{equation*}
where $I_2$ is the $2\times 2$ identity matrix and $J_1 = \begin{bsmallmatrix}\,\,0 & 1 \\ -1 & 0\end{bsmallmatrix}$, and $C, S$ are $q\times q$ diagonal matrices with cosines and sines of $\theta_1, \dots, \theta_q$ on their diagonals. 
The root system is given as the following: ($\beta=2, 4$)
\begin{equation}\label{eq:newjacobiroots2}
    \renewcommand\arraystretch{1.2}
    \begin{array}{r|c|c|c|}
    \cline{2-4}
    \alpha(H) & \pm(\theta_j\pm\theta_k) & \pm\theta_j & \pm2\theta_j\\
    \cline{2-4}
    m_{\alpha}^+ & \beta & \frac{\beta}{2} (p-q) & \beta-1 \\
    \cline{2-4}
    m_{\alpha}^- & \beta & \frac{\beta}{2} (p-q) & \frac{\beta}{2}-1 \\
    \cline{2-4}
    \end{array}
\end{equation}
Again, using \ref{eq:compactk1ak2jac} with the root system above we obtain the following Jacobian, with the change of variables $x_j = \sin^2\theta_j$ for all $j$.
\begin{equation}\label{eq:newjacobi2}
    \prod_{j=1}^q x_j^{\frac{\beta(p-q+2)}{4}-1}(1-x_j)^{\frac{\beta-4}{4}} \prod_{j<k} |x_j - x_k|^{\beta}.
\end{equation}
They are $\beta=2,4$ Jacobi ensembles. Both cases could not be obtained from the classical CSD approach, so they are all non-classical parameters of the Jacobi ensemble. To see this at once, we compare three $\beta=2$ Jacobi densities each from Section \ref{sec:jacobibeta2}, \ref{sec:newjacobi1} and \ref{sec:newjacobi2}. Figure \ref{fig:newjacobi2} shows the possible parameters $\alpha_1, \alpha_2$ of the $\beta=2$ Jacobi ensemble obtained from each approach.

\section{Noncompact A$\RN{1}$, A, A$\RN{2}$: Hermite ensembles}\label{sec:hermite}

While Section \ref{sec:mixed} contains essentially new random matrix theories, Sections \ref{sec:hermite} and \ref{sec:Laguerre} review the Hermite and Laguerre ensembles for completeness \cite{altland1997nonstandard,caselle1996new,caselle2004random,ivanov2002random,zirnbauer1996riemannian}. 

\par The joint probability density of the Hermite ensemble is ($\beta=1,2,4$),
\begin{equation*}
    H_n^{(\beta)}(\lambda) \propto \prod_{j<k}|\lambda_j - \lambda_k|^\beta \prod_{j=1}^n e^{-\lambda^2_j/2}.
\end{equation*}
Hermite ensembles $\beta=1, 2, 4$ (GOE, GUE, GSE) arise as the eigenvalues of symmetric, Hermitian and self-dual Gaussian matrices. Hermite ensembles can be thought as the Gaussian measure endowed on the tangent space of noncompact symmetric spaces of the types A$\RN{1}$, A and A$\RN{2}$. The connection between these symmetric spaces and Hermite ensembles are made by Theorem \ref{thm:liealgebrajac}. The decomposition $\Psi$, \eqref{eq:liealgebramap} in Theorem \ref{thm:liealgebrajac} is the eigendecomposition of symmetric, Hermitian and self-dual matrices. The maximal abelian subalgebra $\mathfrak{a}$ is the collection of all real diagonal matrices, $\text{diag}(h_1, \dots, h_n)$. The restricted root system is the following ($1\le j<k\le n$). 
\begin{equation}\label{tab:hermiteroots}
\renewcommand\arraystretch{1.2}
\begin{array}{r|c|}
\cline{2-2}
\alpha(H) & \pm (h_j-h_k)\\
\cline{2-2}
m_{\alpha} & \beta\\
\cline{2-2}
\end{array}
\end{equation}

\subsection{Noncompact A$\RN{1}$, $\beta = 1$ GOE} 
The dual of the compact symmetric space type A$\RN{1}$, the noncompact symmetric space type A$\RN{1}$ is $G/K = \gl{n, \mathbb{R}}/\ortho{n}$, represented by the set $\mathcal{S}_n$ of all symmetric positive definite matrices. The tangent space at the identity of $\mathcal{S}_n$, $\mathfrak{p}$, is the set of all real symmetric matrices. The Gaussian measure on $\mathfrak{p}$ is, for $p\in\mathfrak{p}$, $\exp(-\tr (p^Tp)/2)dp$ where $dp$ is the Euclidean measure on $\mathfrak{p}$. From \eqref{eq:noncompactliealgebrajac} using \eqref{tab:hermiteroots} $\beta=1$ we obtain (integrate out $dk$)
\begin{equation*}
    \exp(-\tr (p^Tp)/2)dp \propto \prod_{j<k}|\lambda_j - \lambda_k|\prod_{j=1}^n e^{-\lambda_j^2/2}d\lambda_1\dots d\lambda_n,
\end{equation*}
for the eigenvalues of $p$, $\lambda_j = h_j$. This is the joint density of the GOE. 

\subsection{Noncompact A, $\beta = 2$ GUE} 
The noncompact symmetric space type A is $G/K = \gl{n, \mathbb{C}}/\un{n}$, represented by $\mathcal{H}_n$, the set of all Hermitian positive definite matrices. The tangent space at the identity of $\mathcal{H}_n$, $\mathfrak{p}$, is the set of all complex Hermitian matrices. The Gaussian measure on $\mathfrak{p}$ is, for $p\in\mathfrak{p}$, $\exp(-\tr (p^Hp)/2)dp$ where $dp$ is the (real) Euclidean measure on $\mathfrak{p}$. From \eqref{eq:noncompactliealgebrajac} using \eqref{tab:hermiteroots} $\beta=2$ we obtain
\begin{equation*}
    \exp(-\tr (p^Hp)/2)dp \propto \prod_{j<k}|\lambda_j - \lambda_k|^2\prod_{j=1}^n e^{-\lambda_j^2/2}d\lambda_1\dots d\lambda_n,
\end{equation*}
for the eigenvalues of $p$, $\lambda_j = h_j$. This is the joint density of the GUE. 

\subsection{Noncompact A$\RN{2}$, $\beta = 4$ GSE} 
The noncompact symmetric space type $A\RN{2}$ is $G/K = \gl{n, \mathbb{H}}/ \un{n, \mathbb{H}}$. We use $\un{n, \mathbb{H}}$ instead of $\symp{n}$ to clearly indicate the quaternionic realization. $G/K$ can be represented by the set of all quaternionic self-dual positive definite matrices, $\mathcal{QH}_n$. Again, the tangent space at the identity $\mathfrak{p}$ is the set of all quaternionic self-dual matrices. The Gaussian measure on $\mathfrak{p}$ is, for $p\in\mathfrak{p}$, $\exp(-\tr (p^Dp)/2)dp$ where $dp$ is the (real) Euclidean measure on $\mathfrak{p}$. From \eqref{eq:noncompactliealgebrajac} using \eqref{tab:hermiteroots} $\beta=4$ we obtain
\begin{equation*}
    \exp(-\tr (p^Dp)/2)dp \propto \prod_{j<k}|\lambda_j - \lambda_k|^4\prod_{j=1}^n e^{-\lambda_j^2/2}d\lambda_1\dots d\lambda_n,
\end{equation*}
for the eigenvalues of $p$, $\lambda_j = h_j$. This is the joint density of the GSE. 

\section{Noncompact BD$\RN{1}$, A$\RN{3}$, C$\RN{2}$: Laguerre ensembles}\label{sec:Laguerre}
The joint probability density of the Laguerre ensemble is ($\beta = 1,2,4$),
\begin{equation*}
    L_{\alpha, m}^{(\beta)}(\lambda) \propto \prod_{j<k}|\lambda_j - \lambda_k|^\beta \prod_{j=1}^m\lambda_j^{\alpha} e^{-\lambda_j/2}.
\end{equation*}
Laguerre ensembles $\beta = 1, 2, 4$ arise from Theorem \ref{thm:liealgebrajac} applied to noncompact symmetric spaces BD$\RN{1}$, A$\RN{3}$, C$\RN{2}$, D$\RN{3}$, BD, C, C$\RN{1}$. The last four cases of types D$\RN{3}$, BD, C, C$\RN{1}$ are well-studied in \cite{altland1997nonstandard} and we again omit these cases as discussed in Section \ref{sec:jacobi}. In particular, the first three symmetric spaces give the following Laguerre densities ($\beta = 1,2,4$ and $p\geq q$):
\begin{equation*}
    \prod_{j<k}|\lambda_j - \lambda_k|^\beta \prod_{j=1}^q \lambda_j^{\frac{\beta(p-q+1)}{2}-1} e^{-\lambda_j/2},
\end{equation*}
as these $\lambda_j$ values are the squared singular values of $p\times q$ i.i.d. Gaussian matrices. Equivalently, the eigenvalues of the matrix $A^\dagger A\in\mathbb{F}^{q\times q}$ are frequently used for sampling purpose, where $\dagger$ is the conjugate transposition. The tangent spaces of noncompact symmetric spaces of the types BD$\RN{1}$, A$\RN{3}$, C$\RN{2}$ are
\begin{equation}\label{eq:laguerretangentspace}
    \bigg\{\begin{bmatrix} 0 & X \\ X^\dagger & 0\end{bmatrix} : X \text{ is }p\times q\text{ matrix}\bigg\},
\end{equation}
and a choice of maximal abelian subalgebra $\mathfrak{a}$ is the set with $X$ being (nonsquare) diagonal matrix with diagonal elements $h_1, \dots, h_q$. The KAK decomposition $G = KAK$ of the noncompact symmetric spaces BD$\RN{1}$, A$\RN{3}$, C$\RN{2}$ is the \textit{hyperbolic CS decomposition} (HCSD) \cite{grimme1996model,higham2003jortho}. The decomposition $\mathfrak{p} = \cup_{k\in K} k\mathfrak{a}k^{-1}$ is the $p\times q$ SVD on upper right $p\times q$ corner. The restricted roots are the following ($\beta = 1, 2, 4$).
\begin{equation}\label{tab:laguerreroots}
    \renewcommand\arraystretch{1.2}
    \begin{array}{r|c|c|c|}
    \cline{2-4}
    \alpha(H) & \pm (h_j\pm h_k) & \pm h_j & \pm 2h_j\\
    \cline{2-4}
    m_{\alpha} & \beta  & \beta(p-q) & \beta-1\\
    \cline{2-4}
    \end{array}
\end{equation}

\subsection{Noncompact BD$\RN{1}$, $\beta=1$ Laguerre} 
The noncompact symmetric space type BD$\RN{1}$ is $G/K = \ortho{p, q}/(\ortho{p}\times\ortho{q})$. The tangent space $\mathfrak{p}$ \eqref{eq:laguerretangentspace} has the Gaussian measure as i.i.d. Gaussian distribution endowed on the elements of $X$. For $M\in\mathfrak{p}$ it is $\exp(-\tr(M^TM))d\mathfrak{p}$. From \eqref{eq:noncompactliealgebrajac} using \eqref{tab:laguerreroots} $\beta=1$ we obtain
\begin{equation*}
    \exp(-\tr(M^TM))d\mathfrak{p} \propto \prod_{j<k}|\lambda_j-\lambda_k|\prod_{j=1}^q e^{-\lambda_j/2} \lambda_j^{\frac{p-q-1}{2}}d\lambda_1\cdots d\lambda_q,
\end{equation*}
with the change of variables $\lambda_j = h_j^2$. Thus the values $\lambda_1, \dots, \lambda_q$ are the squared singular values of the upper right corner of $M$. The obtained measure is the joint density of the $\beta=1$ Laguerre ensemble. 

\subsection{Noncompact A$\RN{3}$, $\beta = 2$ Laguerre} 
The noncompact symmetric space type A$\RN{3}$ is $G/K = \un{p, q}/(\un{p}\times\un{q})$. The tangent space \eqref{eq:laguerretangentspace} has the Gaussian measure as i.i.d. complex Gaussian distribution endowed on the elements of $X$. For $M\in\mathfrak{p}$ that is $\exp(-\tr(M^HM))d\mathfrak{p}$. From \eqref{eq:noncompactliealgebrajac} using \eqref{tab:laguerreroots} $\beta=2$ we obtain
\begin{equation*}
    \exp(-\tr(M^HM))d\mathfrak{p} \propto \prod_{j<k}|\lambda_j-\lambda_k|^2\prod_{j=1}^q e^{-\lambda_j/2}\lambda_j^{p-q}d\lambda_1\cdots d\lambda_q,
\end{equation*}
with the change of variables $\lambda_j = h_j^2$. Again the values $\lambda_1, \dots, \lambda_q$ are the squared singular values of the upper right corner of $M$. The obtained measure is the joint density of the $\beta=2$ Laguerre ensemble.  

\subsection{Noncompact C$\RN{2}$, $\beta = 4$ Laguerre} 
The noncompact symmetric space C$\RN{2}$ is $G/K = \un{p, q,\mathbb{H}}/(\un{p,\mathbb{H}}\times\un{q,\mathbb{H}})$. The tangent space \eqref{eq:laguerretangentspace} has the Gaussian measure as i.i.d. quaternionic Gaussian distribution endowed on the elements of $X$. For $M\in\mathfrak{p}$ that is $\exp(-\tr(M^DM))d\mathfrak{p}$. From \eqref{eq:noncompactliealgebrajac} using \eqref{tab:laguerreroots} $\beta=4$ we obtain
\begin{equation*}
    \exp(-\tr(M^DM))d\mathfrak{p} \propto \prod_{j<k}|\lambda_j-\lambda_k|^4\prod_{j=1}^q e^{-\lambda_j/2}\lambda_j^{2(p-q)+1}d\lambda_1\cdots d\lambda_q,
\end{equation*}
with the change of variables $\lambda_j = h_j^2$. The values $\lambda_1, \dots, \lambda_q$ are the squared singular values of the upper right corner of $M$. The obtained measure is the joint density of the $\beta=4$ Laguerre ensemble.

\section*{Acknowledgements}
We thank Martin Zirnbauer for the lengthy email thread from 2001, where he patiently explained which random matrix ensembles seemed to be covered by symmetric spaces. We thank Eduardo Due{\~n}ez for another lengthy email thread back in 2013. We thank Pavel Etingof for suggesting the $\kak$ decomposition and pointing us to key references, Bernie Wang  for so very much and the Fall 2020 Random Matrix Theory class (MIT 18.338) for valuable suggestions. We also thank Sigurður Helgason for lively discussions by email. We thank NSF grants OAC-1835443, OAC-2103804, SII-2029670, ECCS-2029670, PHY-2021825 for financial support.

\bibliographystyle{plain}
\bibliography{bibliography.bib}

\end{document}